\providecommand{\algorithmname}{Algorithm}
\theoremstyle{definition}
\newtheorem{defn}{\protect\definitionname}
\theoremstyle{plain}
\newtheorem{lem}{\protect\lemmaname}
\theoremstyle{remark}
\newtheorem{rem}{\protect\remarkname}
\theoremstyle{definition}
\newtheorem{example}{\protect\examplename}
\theoremstyle{plain}
\newtheorem{thm}{\protect\theoremname}
  \newenvironment{proof}[1][\proofname]{\par
    \normalfont\topsep6\p@\@plus6\p@\relax
    \trivlist
    \itemindent\parindent
    \item[\hskip\labelsep
          \scshape
      #1]\ignorespaces
  }{%
    \endtrivlist\@endpefalse
  }
  \providecommand{\proofname}{Proof}
\journal{}
\date{}
\tikzset{
  every overlay node/.style={
    draw=white,anchor=north west,
  },
}
\providecommand{\definitionname}{Definition}
\providecommand{\examplename}{Example}
\providecommand{\lemmaname}{Lemma}
\providecommand{\remarkname}{Remark}
\providecommand{\theoremname}{Theorem}
\begin{document}

\begin{frontmatter}{}

\title{Vector-valued Privacy-Preserving  Average Consensus}

\author[ra]{Lulu~Pan}

\author[ra]{Haibin~Shao \corref{cor1}}

\author[rd]{Yang~Lu}

\author[focal]{Mehran~Mesbahi}

\author[ra]{Dewei~Li}

\author[ra]{Yugeng~Xi}

\fntext[fn2]{The research of H. Shao, L. Pan, D. Li, and Y. Xi has been supported
by the National Natural Science Foundation of China (Grant No. 62103278,
61973214,  61963030). }

\cortext[cor1]{Corresponding author: Haibin Shao (shore@sjtu.edu.cn).}

\address[ra]{Department of Automation, Shanghai Jiao Tong University, Shanghai,
200240, China}

\address[rd]{School of Computing and Communications, Lancaster University, Lancaster,
LA1 4WA, United Kingdom}

\address[focal]{William E. Boeing Department of Aeronautics and Astronautics, University
of Washington, Seattle, WA 98195-2400, USA}
\begin{abstract}
Achieving average consensus without disclosing sensitive information
can be  a critical concern for multi-agent coordination. This paper
examines privacy-preserving average consensus (PPAC) for vector-valued
multi-agent networks. In particular, a set of agents with vector-valued
states aim to collaboratively reach an exact average consensus of
their initial states, while each agent\textquoteright s initial state
cannot be disclosed to other agents. We show that the vector-valued
PPAC problem can be solved via associated matrix-weighted networks
with the higher-dimensional agent state. Specifically, a novel distributed
vector-valued PPAC algorithm is proposed by lifting the agent-state
to higher-dimensional space and designing the associated matrix-weighted
network with dynamic, low-rank, positive semi-definite coupling matrices
to both conceal the vector-valued agent state and guarantee that the
multi-agent network asymptotically converges to the average consensus.
Essentially, the convergence analysis can be transformed into the
average consensus problem on switching matrix-weighted networks. We
show that the exact average consensus can be guaranteed and the initial
agents' states can be kept private if each agent has at least one
``legitimate'' neighbor. The algorithm, involving only basic matrix
operations, is computationally more efficient than cryptography-based
approaches and can be implemented in a fully distributed manner without
relying on a third party. Numerical simulation is provided to illustrate
the effectiveness of the proposed algorithm.

\,
\end{abstract}
\begin{keyword}
Privacy-preserving average consensus \sep matrix-weighted networks
\sep vector-valued state \sep positive semi-definite coupling \sep
dynamic edge weights \sep agent-state lifting.
\end{keyword}

\end{frontmatter}{}

\section{Introduction}

Distributed average consensus plays a crucial role in distributed
estimation, control, and optimization of networked multi-agent systems
\citet{mesbahi2010graph,kia2019tutorial}. Specifically, a set of
agents iteratively exchange information with their neighbors over
a communication graph such that, under certain conditions on the inter-agent
couplings, all the agents' states converge to the average of their
initial states. Traditional average consensus algorithms admit the
exchange of explicit states between neighboring agents, which may
disclose initial states of the agents to malicious ones. In certain
average consensus applications, however, an agent's initial state
contains sensitive information and must be kept confidential. One
example is opinion consensus in social networks, where a set of agents
holding respective opinions leverage consensus algorithms to reach
a common opinion but are unwilling to reveal their own opinions as
these opinions may reflect sensitive personal preferences \citet{mo2016privacy}.
Another example is dynamic formation control, where a set of mobile
agents employ consensus algorithms to agree on, say, the time-varying
geometric center of their formation \citet{PORFIRI20071318}. However,
these agents may not want to expose their own locations during this
process, as that could reveal their habits, interests, activities,
and relationships \citet{5601955}. This necessitates new class of
algorithms that can simultaneously achieve average consensus while
protecting the privacy of individual agents' initial states.

\subsection*{Literature Review }

The problem of privacy-preserving average consensus (PPAC) has been
extensively examined in recent literature. Perturbing the original
data prior to sharing it on the network is a typical option for privacy-enhancing
coordination \citet{mo2016privacy,he2018distributed,6669251}. Here,
differential privacy is a formal framework for the design and analysis
of privacy-preserving algorithms by persistently adding random noises
into agents' states such that, with high probability, the initial
state of an agent cannot be inferred by adversarial agents \citet{8486684,NOZARI2017221,he2020differential}.
However, there is a fundamental trade-off between privacy and accuracy
of the consensus value as persistent random noises are incorporated
\citet{he2020differential,cortes2016differential,lu2019control}.
Notably, some of the works have proposed utilizing decaying or correlated
noises to obfuscate exchanged states such that individual agents'
initial states cannot be uniquely determined while average consensus
can be maintained \citet{6669251,8619133,he2018distributed}. However,
there is a risk of revealing an interval of an agent's initial state,
corresponding to the magnitude of the added noises \citet{he2018distributed}.
Cryptography-based approaches are also employed for privacy-preserving
algorithm design. In fact, the PPAC problem is closely related to
secure multi-party computation, where all participants compute a joint
function while preserving their respective inputs private \citet{lu2019control,paillier1999public,lu2020privacy,6375935,KK-TF:2015}.
In this line of work, additive properties of homomorphic encryption
are employed to encrypt exchanged states such that the desired computations
can be carried out on encrypted states, thus generating an encrypted
result that, when decrypted, matches the result of computations performed
on original states \citet{ruan2019secure}. Nevertheless, cryptography-based
approaches may suffer from communication and computation overhead
caused by the encryption process; moreover a centralized authority
is necessary to carry out aggregation over encrypted data \citet{lu2019control,chong2019tutorial}.
In \citet{wang2019privacy}, a state-decomposition-based protocol
has been proposed where the state of a node is randomly decomposed
into two substates such that the mean of these states remains the
same and only one of the substates is revealed to neighboring nodes.
Privacy-preserving problems for continuous-time dynamical systems
have also been examined recently by \citet{altafini2020system,xiong2022privacy}.

To the best of our knowledge, most existing privacy-preserving average
consensus algorithms are specifically designed for scalar-valued agent
states; however, vector-valued information exchange is ubiquitous
in multi-agent networks. For instance, vector-valued local estimates
of the optimal solution are exchanged amongst neighboring agents for
multi-agent optimization \citet{yang2019survey}; point vectors of
neighboring agents in a global reference frame are exchanged to construct
relative bearing vector in the distributed control law for bearing-based
multi-agent formation \citet{zhao2015bearing,trinh2018bearing}. 

Inspired by these recent developments in matrix-weighted networks,
where the couplings amongst vector-valued agent states are characterized
by square matrices \citet{trinh2018matrix,tuna2016synchronization,pan2018bipartite,wang2022characterizing,pan2021consensus,pan2021cluster,pan2020controllability},
this paper proposes a novel distributed vector-valued PPAC algorithm
by employing matrix-valued state coupling amongst the agents. In this
view, the state coupling of neighboring agents in traditional scalar-weighted
networks can be regarded as a $d$-order identity matrix where $d$
refers to the dimension of the agents' states. We show that vector-valued
PPAC can in fact be transformed into a properly constructed matrix-weighted
networks.

Contributions. In this paper, a novel distributed vector-valued PPAC
algorithm, based on matrix-valued state coupling, is proposed. The
idea of the algorithm is to utilize properly constructed, low-rank,
semi-definite coupling matrices, together with agent-state lifting
(introducing virtual states), to conceal the vector-valued agent states,
and in the meantime, employ a periodic edge weight switching mechanism
to guarantee that the multi-agent network asymptotically converges
to the exact average consensus. We show that the exact average consensus
can be guaranteed and the initial agents' states can be preserved
by the proposed algorithm  if each agent has at least one ``legitimate''
neighbor. Since our algorithm only involves basic matrix multiplication,
it is more computationally efficient than cryptography-based algorithms.
Moreover, the algorithm can be implemented in a distributed manner
without the involvement of a third party.

The results in this paper have the following immediate implications.
Firstly, the vector-valued PPAC framework can be immediately applied
in distributed control and optimization algorithms with privacy-preservation
concerns \citet{dibaji2019systems,kia2019tutorial}. Secondly, we
present general results on the average asymptotic consensus problems
on time-varying discrete-time matrix-weighted networks, which is of
independent interest in distributed control and optimization \citet{trinh2018matrix,wang2022characterizing,tuna2016synchronization}.
Moreover, the proposed algorithm can be applied to scaler-valued PPAC
problem, where the virtual states can serve as an effective means
of privacy-preservation.

The remainder of the paper is organized as follows. Preliminaries
and problem formulation are presented in $\mathsection$\ref{sec:problem-formulation},
where notation and a brief introduction to the matrix-weighted network
are introduced. Vector-valued PPAC algorithm, based on matrix-valued
inter-agent state coupling, is proposed in $\mathsection$\ref{sec:Algorithm},
followed by the analysis of average consensus and privacy-preserving
performance in $\mathsection$\ref{sec:Consensus} and $\mathsection$\ref{sec:Privacy},
respectively. Simulation results are presented in $\mathsection$\ref{sec:Simulation};
we provide concluding remarks in $\mathsection$\ref{sec:Conclusion}.

\section{Notation and Problem Formulation\label{sec:problem-formulation}}

We first introduce the notation. Let $\mathbb{R}$, $\mathbb{N}$
and $\mathbb{Z}_{+}$ be the set of real numbers, natural numbers
and positive integers, respectively. For $n\in\mathbb{Z}_{+}$, denote
$\underline{n}=\left\{ 1,2,\ldots,n\right\} $. We use $M\succ0$
(respectively, $M\succeq0$) to denote that a symmetric matrix $M$
is positive definite (respectively, positive semi-definite). The null
space and range space of a matrix $M$ is denoted by $\text{{\bf null}}(M)$
and $\text{{\bf range}}(M)$, respectively. Let $\mathds{1}_{n}$,
$\boldsymbol{0}_{n\times n}$ and $\boldsymbol{I}_{n}$ designate
the $n$-dimensional column vector whose components are all $1$'s,
the $n\times n$ matrix whose components are all $0$'s, and the $n\times n$
identity matrix, respectively. The eigenspace of $M\in\mathbb{R}^{n\times n}$
corresponding to eigenvalue $\lambda\in\mathbb{R}$ is denoted by
$\mathbb{E}_{\lambda}=\left\{ \boldsymbol{v}\in\mathbb{R}^{n}\thinspace|\thinspace M\boldsymbol{v}=\lambda\boldsymbol{v}\right\} $.
We use $a\thinspace\text{{\bf mod}}\thinspace b$ to refer to the
remainder of the Euclidean division where $a\in\mathbb{Z}_{+}$ is
the dividend and $b\in\mathbb{Z}_{+}$ is the divisor. Let $\delta(x):\mathbb{R}\mapsto\left\{ 0,1\right\} $
denote the sign function such that $\delta(x)=1$ for $x>0$ and $\delta(x)=0$
for $x\le0.$ For a vector $\boldsymbol{x}\in\mathbb{R}^{n}$ and
$i_{1}<i_{2}\in\mathbb{Z}_{+}$, we write $\boldsymbol{x}^{[i1:i2]}$
to denote a $\mathbb{R}^{(i_{2}-i_{1}+1)}$ vector formed by entries
of $\boldsymbol{x}$ in sequence from $i_{1}$ to $i_{2}$. 

Consider a multi-agent system consisting of $n>1$ ($n\in\mathbb{Z}_{+}$)
agents whose interaction network is characterized by a communication
graph $\mathcal{G}=(\mathcal{V},\mathcal{E},A)$. The node and edge
sets of $\mathcal{G}$ are denoted by $\mathcal{V}=\left\{ 1,2,\ldots,n\right\} $
and $\mathcal{E}\subseteq\mathcal{V}\times\mathcal{V}$, respectively.
A path in $\mathcal{G}$ is a sequence of edges of the form $(i_{1},i_{2}),(i_{2},i_{3}),\ldots,(i_{p-1},i_{p})$,
where nodes $i_{1},i_{2},\ldots,i_{p}\in\mathcal{V}$ are distinct;
in this case we say that node $i_{p}$ is reachable from $i_{1}$.
A graph $\mathcal{G}$ is connected if any two distinct nodes in $\mathcal{G}$
are reachable from each other. A tree is a connected graph with $n\ge2$
nodes and $n-1$ edges where $n\in\mathbb{Z}_{+}$. For matrix-weighted
switching networks, we adopt the following terminology. An edge $(i,j)\in\mathcal{E}$
is positive definite (semi-definite) if $A_{ij}$ is positive definite
(semi-definite). A positive tree of $\mathcal{G}$ is a tree such
that every edge in this tree is positive definite. A positive spanning
tree of $\mathcal{G}$ is a positive tree containing all nodes in
$\mathcal{G}$. 

\subsection{Matrix-weighted Average Consensus}

In a matrix-weighted network $\mathcal{G}=(\mathcal{V},\mathcal{E},A)$,
each edge $(i,j)\in\mathcal{E}$ is assigned a matrix-valued weight
encoded by a matrix $A_{ij}\in\mathbb{R}^{d\times d}$ such that $A_{ij}\not=\boldsymbol{0}_{d\times d}$
if $(i,j)\in\mathcal{E}$, and $A_{ij}=\boldsymbol{0}_{d\times d}$
otherwise. We shall assume that all non-zero matrix-valued weights
are either positive definite or positive semi-definite unless otherwise
stated. Thereby, the matrix-valued adjacency matrix $A=(A_{ij})\in\mathbb{R}^{dn\times dn}$
is a block matrix such that the block located in its $i$-th row and
the $j$-th column is $A_{ij}$. Each agent $i$ has a vector-valued
initial state $\boldsymbol{x}_{i}(0)=(x_{i1}(0),\ldots,x_{id}(0))^{\top}\in\mathbb{R}^{d}$.
The agents aim to agree on the average of their initial states. To
this end, each agent $i$ updates its state by the protocol, 
\begin{equation}
\boldsymbol{x}_{i}(k+1)=\boldsymbol{x}_{i}(k)+\sigma\sum_{j\in\mathcal{N}_{i}}A_{ij}(\boldsymbol{x}_{j}(k)-\boldsymbol{x}_{i}(k)),\label{equ:matrix-consensus-protocol}
\end{equation}
where $\sigma>0$, $k\in\mathbb{N}$ is the step index, and 
\[
\mathcal{N}_{i}=\left\{ j\in\mathcal{V}\,|\,(i,j)\in\mathcal{E}\right\} 
\]
denotes the neighbor set of agent $i$. The matrix-valued Laplacian
matrix of $\mathcal{G}$ is defined as $L=D-A$, where $D=\text{{\bf diag}}\left\{ D_{1},\cdots,D_{n}\right\} \in\mathbb{R}^{dn\times dn}$
and ${\color{black}{\color{black}{\color{red}{\color{blue}{\color{black}D_{i}=\sum_{j\in\mathcal{N}_{i}}A_{ij}\in\mathbb{R}^{d\times d}}}}}}$.
Subsequently, the overall dynamics of \eqref{equ:matrix-consensus-protocol}
can be compactly written as, 
\begin{equation}
\boldsymbol{x}(k+1)=(\boldsymbol{I}_{dn}-\sigma L)\boldsymbol{x}(k),\label{equ:matrix-consensus-overall}
\end{equation}
where $\boldsymbol{x}(k)=(\boldsymbol{x}_{1}^{\top}(k),\ldots,\boldsymbol{x}_{n}^{\top}(k))^{\top}\in\mathbb{R}^{dn}$.

The definition of asymptotic average consensus is formalized next.
\begin{defn}
The matrix-weighted network \eqref{equ:matrix-consensus-overall}
achieves asymptotic average consensus if 
\begin{equation}
{\color{black}{\color{blue}{\color{black}{\displaystyle \lim_{k\rightarrow\infty}}\boldsymbol{x}(k)=\mathds{1}_{n}\otimes\left(\text{Avg}(\boldsymbol{x}(0))\right)}},}
\end{equation}
where $\text{Avg}(\boldsymbol{x}(0))=\frac{1}{n}\sum_{i=1}^{n}\boldsymbol{x}_{i}(0)$
denotes the average value of all the $n$ agents' initial states. 
\end{defn}
\textcolor{black}{The following lemmas characterizes the null space
of matrix-weighted Laplacian which plays a central role in achieving
asymptotic average consensus under \eqref{equ:matrix-consensus-overall}.}
\begin{lem}
\label{lem:1-1}\citet{pan2018bipartite,trinh2018matrix} Let $\mathcal{G}=(\mathcal{V},\mathcal{E},A)$
be a matrix-weighted network. Then the associated matrix-valued Laplacian
matrix $L$ of $\mathcal{G}$ is positive semi-definite and the structure
of its null space can be characterized by $\text{{\bf null}}(L)=\text{{\bf span}}\left\{ \mathcal{R},\mathcal{H}\right\} ,$
where 
\begin{equation}
\mathcal{R}=\text{{\bf range}}\{\mathds{1}_{n}\otimes I_{d}\},\label{eq:consensus-space}
\end{equation}
and 
\begin{align}
\mathcal{H}=\{\boldsymbol{v} & =(\boldsymbol{v}_{1}^{\top},\boldsymbol{v}_{2}^{\top},\cdots,\boldsymbol{v}_{n}^{\top})^{\top}\in\mathbb{R}^{dn}\mid\nonumber \\
 & (\boldsymbol{v}_{i}-\boldsymbol{v}_{j})\in\text{{\bf null}}(A_{ij}),\,(i,j)\in\mathcal{E}\}.\label{eq:edge-space}
\end{align}
\end{lem}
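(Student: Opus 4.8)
The plan is to reduce both assertions to the Laplacian quadratic form. First I would compute, for an arbitrary $\boldsymbol{x}=(\boldsymbol{x}_{1}^{\top},\ldots,\boldsymbol{x}_{n}^{\top})^{\top}\in\mathbb{R}^{dn}$, the identity
\[
\boldsymbol{x}^{\top}L\boldsymbol{x}=\frac{1}{2}\sum_{(i,j)\in\mathcal{E}}(\boldsymbol{x}_{i}-\boldsymbol{x}_{j})^{\top}A_{ij}(\boldsymbol{x}_{i}-\boldsymbol{x}_{j}),
\]
obtained by writing $L=D-A$, substituting $D_{i}=\sum_{j\in\mathcal{N}_{i}}A_{ij}$ into $\boldsymbol{x}^{\top}D\boldsymbol{x}=\sum_{i}\boldsymbol{x}_{i}^{\top}D_{i}\boldsymbol{x}_{i}$, and using the symmetry $A_{ij}=A_{ji}$ to regroup the cross terms in $\boldsymbol{x}^{\top}A\boldsymbol{x}=\sum_{(i,j)}\boldsymbol{x}_{i}^{\top}A_{ij}\boldsymbol{x}_{j}$. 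Since by assumption every nonzero weight satisfies $A_{ij}\succeq0$, each summand is nonnegative, which immediately gives $\boldsymbol{x}^{\top}L\boldsymbol{x}\ge0$ for all $\boldsymbol{x}$, i.e. $L\succeq0$.

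For the null space I would exploit the standard fact that a symmetric positive semi-definite matrix $L$ satisfies $\boldsymbol{x}\in\text{{\bf null}}(L)$ if and only if $\boldsymbol{x}^{\top}L\boldsymbol{x}=0$. Combining this with the identity above and the nonnegativity of the individual summands, $\boldsymbol{x}\in\text{{\bf null}}(L)$ is equivalent to $(\boldsymbol{x}_{i}-\boldsymbol{x}_{j})^{\top}A_{ij}(\boldsymbol{x}_{i}-\boldsymbol{x}_{j})=0$ holding simultaneously for every edge $(i,j)\in\mathcal{E}$.

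The key step is then to upgrade this scalar vanishing condition to the vector membership $(\boldsymbol{x}_{i}-\boldsymbol{x}_{j})\in\text{{\bf null}}(A_{ij})$. For this I would use that each $A_{ij}$ is symmetric positive semi-definite and hence factors as $A_{ij}=B_{ij}^{\top}B_{ij}$; then $(\boldsymbol{x}_{i}-\boldsymbol{x}_{j})^{\top}A_{ij}(\boldsymbol{x}_{i}-\boldsymbol{x}_{j})=\|B_{ij}(\boldsymbol{x}_{i}-\boldsymbol{x}_{j})\|^{2}=0$ forces $B_{ij}(\boldsymbol{x}_{i}-\boldsymbol{x}_{j})=\boldsymbol{0}$ and therefore $A_{ij}(\boldsymbol{x}_{i}-\boldsymbol{x}_{j})=\boldsymbol{0}$. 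This chain of equivalences shows exactly that $\text{{\bf null}}(L)=\mathcal{H}$. To finish, I would observe that $\mathcal{R}\subseteq\mathcal{H}$: any $\boldsymbol{v}=\mathds{1}_{n}\otimes\boldsymbol{c}$ has equal blocks $\boldsymbol{v}_{i}=\boldsymbol{c}$, so that $\boldsymbol{v}_{i}-\boldsymbol{v}_{j}=\boldsymbol{0}\in\text{{\bf null}}(A_{ij})$ holds trivially, whence $\text{{\bf span}}\{\mathcal{R},\mathcal{H}\}=\mathcal{H}=\text{{\bf null}}(L)$.

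I expect the main obstacle to be precisely this last upgrade from the scalar identity to the null-space membership: it depends crucially on the symmetry and positive semi-definiteness of each coupling matrix $A_{ij}$ (through the square-root factorization $A_{ij}=B_{ij}^{\top}B_{ij}$), and would fail for indefinite or asymmetric weights. The remaining steps---deriving the quadratic form and reading off $\mathcal{R}\subseteq\mathcal{H}$---are routine bookkeeping once the identity is in place.
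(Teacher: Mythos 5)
Your proof is correct. The paper does not supply its own proof of this lemma---it is quoted from the references cited in the statement---and your argument (expanding the Laplacian quadratic form into the edge-wise sum $\frac{1}{2}\sum_{(i,j)\in\mathcal{E}}(\boldsymbol{x}_{i}-\boldsymbol{x}_{j})^{\top}A_{ij}(\boldsymbol{x}_{i}-\boldsymbol{x}_{j})$ to get $L\succeq0$, then using the factorization $A_{ij}=B_{ij}^{\top}B_{ij}$ to upgrade the vanishing of each summand to $(\boldsymbol{x}_{i}-\boldsymbol{x}_{j})\in\text{{\bf null}}(A_{ij})$, and finally noting $\mathcal{R}\subseteq\mathcal{H}$ so that $\text{{\bf span}}\{\mathcal{R},\mathcal{H}\}=\mathcal{H}=\text{{\bf null}}(L)$) is precisely the standard route taken in that cited literature, so there is nothing substantive to contrast.
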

\begin{lem}
\label{lem:1-2}\citet{trinh2018matrix,pan2018bipartite} Let $\mathcal{G}=(\mathcal{V},\mathcal{E},A)$
be a matrix-weighted network\textcolor{red}{.} If $\mathcal{G}$ has
a positive spanning tree\footnote{A positive spanning tree in a matrix-weighted network $\mathcal{G}=(\mathcal{V},\mathcal{E},A)$
is a spanning tree $\mathcal{T}$ of $\mathcal{G}$ such all edge
weights in $\mathcal{T}$ are positive definite matrices.}, then the matrix-valued Laplacian $L$ satisfies $\text{{\bf null}}(L)=\mathcal{R}$. 
\end{lem}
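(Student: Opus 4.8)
The plan is to build directly on Lemma~\ref{lem:1-1}, which already gives $\text{{\bf null}}(L)=\text{{\bf span}}\{\mathcal{R},\mathcal{H}\}$, and to reduce the claim to a single inclusion. The inclusion $\mathcal{R}\subseteq\text{{\bf null}}(L)$ holds unconditionally, since any $\boldsymbol{v}=\mathds{1}_{n}\otimes\boldsymbol{w}$ has identical blocks and hence $\boldsymbol{v}_{i}-\boldsymbol{v}_{j}=\boldsymbol{0}\in\text{{\bf null}}(A_{ij})$ for every edge $(i,j)\in\mathcal{E}$. It therefore suffices to establish the reverse inclusion $\text{{\bf null}}(L)\subseteq\mathcal{R}$, which by the characterization in Lemma~\ref{lem:1-1} amounts to showing $\mathcal{H}\subseteq\mathcal{R}$; that is, any $\boldsymbol{v}=(\boldsymbol{v}_{1}^{\top},\ldots,\boldsymbol{v}_{n}^{\top})^{\top}$ with $\boldsymbol{v}_{i}-\boldsymbol{v}_{j}\in\text{{\bf null}}(A_{ij})$ for all $(i,j)\in\mathcal{E}$ must have all its blocks equal.

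The key step is to exploit the positive definiteness of the tree edges. Let $\mathcal{T}$ denote the positive spanning tree guaranteed by hypothesis. For every edge $(i,j)$ of $\mathcal{T}$ the weight $A_{ij}$ is positive definite, hence invertible, so that $\text{{\bf null}}(A_{ij})=\{\boldsymbol{0}\}$. Consequently the defining constraint of $\mathcal{H}$, restricted to the edges of $\mathcal{T}$, collapses to $\boldsymbol{v}_{i}-\boldsymbol{v}_{j}=\boldsymbol{0}$, i.e.\ $\boldsymbol{v}_{i}=\boldsymbol{v}_{j}$ whenever $(i,j)$ is an edge of $\mathcal{T}$.

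It remains to propagate this equality across the entire node set. Because $\mathcal{T}$ is spanning and connected, for any two nodes $p,q\in\mathcal{V}$ there is a path $(p=i_{1},i_{2},\ldots,i_{m}=q)$ all of whose consecutive edges lie in $\mathcal{T}$. Chaining $\boldsymbol{v}_{i_{\ell}}=\boldsymbol{v}_{i_{\ell+1}}$ along this path yields $\boldsymbol{v}_{p}=\boldsymbol{v}_{q}$. Since $p$ and $q$ are arbitrary, every block of $\boldsymbol{v}$ equals a common vector $\boldsymbol{w}\in\mathbb{R}^{d}$, whence $\boldsymbol{v}=\mathds{1}_{n}\otimes\boldsymbol{w}\in\mathcal{R}$. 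This establishes $\mathcal{H}\subseteq\mathcal{R}$ and therefore $\text{{\bf null}}(L)=\mathcal{R}$.

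I expect this argument to be essentially routine, with no serious obstacle. The only points demanding care are the logical reduction at the outset — confirming that $\mathcal{R}\subseteq\text{{\bf null}}(L)$ holds so that the problem genuinely reduces to $\mathcal{H}\subseteq\mathcal{R}$ — and noting that it is positive definiteness, rather than mere positive semi-definiteness, on the tree edges that forces the trivial null space needed to initiate the propagation.
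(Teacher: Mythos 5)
Your proof is correct, and it is the standard argument for this fact: reduce to showing $\mathcal{H}\subseteq\mathcal{R}$ via the characterization in Lemma~\ref{lem:1-1}, use positive definiteness of the weights on the spanning-tree edges to force $\boldsymbol{v}_{i}=\boldsymbol{v}_{j}$ there, and propagate equality through the tree's connectivity. Note that the paper itself offers no proof of this lemma --- it is quoted as a known result from the cited references --- and your argument is precisely the one those references employ, so there is nothing to compare beyond observing that you have correctly filled in the citation.
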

\textcolor{black}{A notable feature of continous-time multi-agent
networks on scalar-weighted networks is that network connectivity
can translate into achieving consensus, which is not valid for matrix-weighted
networks. In this case, it is therefore intricate to obtain a purely
graph-theoretic condition for achieving consensus without any assumptions
on the matrix-valued edge weights, the null space of which play a
paramount role in achieving consensus on continous-time matrix-weighted
networks \citet{trinh2018matrix,pan2018bipartite}. }

\textcolor{black}{Specifically, according to Lemma \ref{lem:1-1},
the null space of a matrix-valued Laplacian is not only determined
by the network connectivity, but also by the null space of each matrix-valued
edge weight. In this case, the null space of the matrix-valued Laplacian
associated with a connected matrix-weighted network may not be $\mathcal{R}$,
implying that the discrete-time multi-agent system \eqref{equ:matrix-consensus-overall}
on connected matrix-weighted networks may also achieve cluster consensus
due to the complexity of $\text{{\bf null}}(L)$, regardless of the
selection of $\sigma$. We have the similar observations; see Example
\ref{exa:cluster-consensus}. }

\subsection{Consensus with Privacy}

For preserving privacy, it is not preferable to transmit agents' states
in \eqref{equ:matrix-consensus-protocol}. Instead, by a rearrangement
of \eqref{equ:matrix-consensus-protocol},
\begin{align}
\boldsymbol{x}_{i}(k+1) & =\boldsymbol{x}_{i}(k)+\sigma\sum_{j\in\mathcal{N}_{i}}(A_{ij}\boldsymbol{x}_{j}(k)-A_{ij}\boldsymbol{x}_{i}(k)),\label{eq:protocol-rearrange}
\end{align}
one can immediately see that each $j\in\mathcal{N}_{i}$ only needs
to send
\begin{equation}
\boldsymbol{y}_{j\rightarrow i}(k)=A_{ij}\boldsymbol{x}_{j}(k)
\end{equation}
to $i$ for \textcolor{black}{its} state update. Therefore, the state
$\boldsymbol{x}_{j}(k)$ can be concealed by a properly designed weight
matrix $A_{ij}$, yielding the following protocol,

\begin{align}
\boldsymbol{x}_{i}(k+1) & =\boldsymbol{x}_{i}(k)+\sigma\sum_{j\in\mathcal{N}_{i}}(\boldsymbol{y}_{j\rightarrow i}(k)-A_{ij}\boldsymbol{x}_{i}(k)).\label{eq:protocol-transmission}
\end{align}
Note that the weight matrix $A_{ij}$ cannot be positive definite,
otherwise, agent $i$ can immediately infer the state of agent $j$
by $A_{ij}^{-1}(A_{ij}\boldsymbol{x}_{j}(k))=\boldsymbol{x}_{j}(k)$.
On the other hand, according to Lemma \ref{lem:1-2}, the design of
each $A_{ij}$ also has to guarantee the existence of positive spanning
tree in $\mathcal{G}$ for average consensus. The aforementioned two
criteria for edge weight matries can conflict with each other. This
motivates us to consider dynamic matrix-valued edge weight mechanism,
examined in $\mathsection$ \ref{subsec:Dynamic-Matrix-valued-Edge}.

\subsection{Attacker Model and Privacy Definition}

This paper is concerned with the honest-but-curious (or semi-honest)
attacker model, i.e., an adversarial agent follows the designed algorithm
but attempts to use its received data to infer other agents' private
data (\citet{Hazay}). We shall refer to this type of adversarial
agent as an honest-but-curious agent. This attacker model has been
widely used in PPAC \citet{ruan2019secure,he2018distributed,huang2012differentially}.
Moreover, a legitimate agent is an agent who follows the consensus
protocol faithfully without attempting to infer other agents\textquoteright{}
states.

To formalize privacy, we introduce the following terms. Let $\mathcal{V}_{\mathcal{B}}$
and $\mathcal{V}_{\mathcal{A}}$ be the set of legitimate (or benign)
and honest-but-curious agents, respectively. Notice that $\mathcal{V}=\mathcal{V}_{\mathcal{B}}\cup\mathcal{V}_{\mathcal{A}}$.
Let $\mathcal{M}$ be a privacy-preserving algorithm \eqref{eq:protocol-rearrange}.
For each $i\in\mathcal{V}$, let $\mathcal{I}_{i}$ be agent $i$'s
input other than $\boldsymbol{x}_{i}(0)$ to $\mathcal{M}$. For a
given input $\{\boldsymbol{x}_{i}(0),\mathcal{I}_{i}\}_{i\in\mathcal{V}}$,
denote by $\mathcal{O}_{\mathcal{S}}^{\mathcal{M}}(\{\boldsymbol{x}_{i}(0),\mathcal{I}_{i}\}_{i\in\mathcal{V}})$
the set of all the observations a subset of agents $\mathcal{S}\subset\mathcal{V}$
($\mathcal{S}\not=\emptyset$) can gain throughout the execution of
$\mathcal{M}$ over the input $\{\boldsymbol{x}_{i}(0),\mathcal{I}_{i}\}_{i\in\mathcal{V}}$.
We are now ready to formalize the notation of privacy \citet{wang2019privacy,rezazadeh2018privacy}.
\begin{defn}
\label{def:definition of privacy} Let $\mathcal{M}$ be a privacy-preserving
algorithm \eqref{eq:protocol-rearrange}. We say the privacy of the
initial value of agent $b\in\mathcal{V}_{\mathcal{B}}$, denoted by
$\boldsymbol{x}_{b}(0)$, is preserved if for any $\bar{\boldsymbol{x}}_{b}(0)\in\mathbb{R}^{d}$,
there exist associated $\bar{\mathcal{I}}_{b}$ and $\{\bar{\boldsymbol{x}}_{l}(0),\bar{\mathcal{I}}_{l}\}_{l\in\mathcal{V}_{\mathcal{B}}\backslash\{b\}}$
such that 
\begin{align}
 & \mathcal{O}_{\mathcal{V}_{\mathcal{A}}}^{\mathcal{M}}\left(\{\bar{\boldsymbol{x}}_{b}(0),\bar{\mathcal{I}}_{b}\},\{\bar{\boldsymbol{x}}_{l}(0),\bar{\mathcal{I}}_{l}\}_{l\in\mathcal{V}_{\mathcal{B}}\backslash\{b\}},\{\boldsymbol{x}_{a}(0),\mathcal{I}_{a}\}_{a\in\mathcal{V}_{\mathcal{A}}}\right)\nonumber \\
 & =\mathcal{O}_{\mathcal{V}_{\mathcal{A}}}^{\mathcal{M}}\left(\{\boldsymbol{x}_{i}(0),\mathcal{I}_{i}\}_{i\in\mathcal{V}}\right).\label{eq:privacy-definition}
\end{align}
\end{defn}
\begin{rem}
The intuition behind Definition \ref{def:definition of privacy} is
that, for any input $\{\boldsymbol{x}_{i}(0),\mathcal{I}_{i}\}_{i\in\mathcal{V}}$
and all $b\in\mathcal{V}_{\mathcal{B}}$, any $\bar{\boldsymbol{x}}_{b}(0)\in\mathbb{R}^{d}$
is admissible to the adversarial agents' observations gained through
the implementation of the privacy-preserving algorithm; hence any
vector in $\mathbb{R}^{d}$ is indistinguishable from the true initial
state $\boldsymbol{x}_{b}(0)$ to the adversarial agents \citet{wang2019privacy,rezazadeh2018privacy}.
\end{rem}
\begin{rem}
Definition \ref{def:definition of privacy} also implies that even
a bound of a legitimate agent's initial state cannot be determined
by adversarial agents. 
\end{rem}

\subsection{\label{subsec:Objectives}Objectives}

This paper aims to develop a vector-valued PPAC algorithm such that
the following properties are guaranteed simultaneously: 
\begin{itemize}
\item Exact average consensus: For each agent $i\in\mathcal{V}$, its state
asymptotically converges to $\text{Avg}(\boldsymbol{x}(0))$.
\item Privacy: The privacy of the initial states of legitimate agents is
protected in the sense of Definition \ref{def:definition of privacy}. 
\end{itemize}

\section{Algorithm Design \label{sec:Algorithm}}

This work intends to design a novel vector-valued PPAC algorithm by
employing the framework of matrix-weighted networks (MWNs).

We make the following assumption on the communication networks throughout
this paper.

\textbf{Assumption 1}: The communication graph $\mathcal{G}$ is undirected
and connected.

In the following, we shall refer to our algorithm as the MWN-PPAC
algorithm; let us first motivate its structure.

We denote by $\mathbb{\mathcal{A}}=\left\{ A_{ij}\thinspace|\thinspace(i,j)\in\mathcal{E}\right\} $
the profile of matrix-valued edge weights of $\mathcal{G}$. Then
profile $\mathbb{\mathcal{A}}$ needs to ensure the objectives in
$\mathsection$\ref{subsec:Objectives}. Essentially, our MWN-PPAC
algorithm is to achieve a fundamental  tradeoff between privacy preservation
and average consensus in designing profile $\mathcal{A}$. In fact,
the positive semi-definite edge weight matrices are plausible for
privacy-preservation purposes but not for the average consensus, since
according to Lemma \ref{lem:1-2}, positive definite edge weight matrices
are more preferable to guarantee the existence of a positive spanning
tree in matrix-weight networks. 

We shall now proceed to achieve the aforementioned design objectives,
that of privacy and consensus, by agent state lifting and dynamic
matrix-valued weights. 

\subsection{Agent State Lifting \label{subsec:Agent-State-Augmentation}}

First, in designing profile $\mathbb{\mathcal{A}}$, the rank of each
$A_{ij}$ needs to be as low as possible for privacy preservation.
Then, less information with respect to the correlation of entries
in $\boldsymbol{x}_{j}(k)$ can be inferred from $\boldsymbol{y}_{j\rightarrow i}(k)$.\textcolor{black}{{}
Note that designing low-rank matrix-weighted edge weights is the most
preferable for this purpose.} However, although one cannot infer each
element in $\boldsymbol{x}_{i}(0)$, a linear correlation amongst
the entries in $\boldsymbol{x}_{i}(0)$ can be deduced. To this end,
the algorithm is designed to first lift the individual state space
by introducing a $d^{\prime}$-dimensional virtual state for each
agent. The idea of this design is essential to sacrifice the possible
disclosure of linear correlation of entries in virtual states instead
of that in real agents' states. We shall technically demonstrate this
point in the proof of Theorem \ref{thm:Privacy-preserving}. For a
glimpse, here we consider the following example.
\begin{figure}[H]
\begin{centering}
\begin{tikzpicture}[scale=0.8]
	\node (n1) at (-3,0.3) [circle,circular drop shadow,fill=black!20,draw] {1};
    \node (n2) at (-2,-1.5) [circle,circular drop shadow,fill=black!20,draw] {2};
    \node (n3) at (-1 ,0.3) [circle,circular drop shadow,fill=black!20,draw] {3};
    \node (n4) at (1 ,0.3) [circle,circular drop shadow,fill=black!20,draw] {4};
	\node (n5) at (0,-1.5) [circle,circular drop shadow,fill=black!20,draw] {5};
    \node (G2) at (-1,-2.5) {{$\mathcal{G}$}};
	\draw[-, ultra thick, color=black!70] (n1) -- (n2); 
	\draw[-, ultra thick, color=black!70] (n2) -- (n3); 
    \draw[-, ultra thick, color=black!70] (n1) -- (n3); 
    \draw[-, ultra thick, color=black!70] (n2) -- (n5);
    \draw[-, ultra thick, color=black!70] (n3) -- (n4);
    \draw[-, ultra thick, color=black!70] (n4) -- (n5); 
    \draw[-, ultra thick, color=black!70] (n1) -- (n2); 
\end{tikzpicture}
\par\end{centering}
\caption{A $5$-node connected matrix-weighted network $\mathcal{G}$.}
\label{fig:integral-network}
\end{figure}
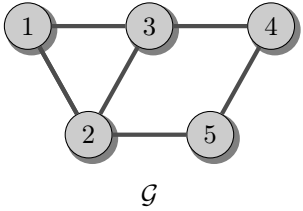

\begin{example}
\label{exa:why-switching-network}Consider a connected matrix-weighted
network $\mathcal{G}$ with $n=5$ agents in Figure \ref{fig:integral-network}.
Choose the following semi-definite matrix as edge weight at time $k=0$,
namely,
\[
A_{ij}=\left(\begin{array}{ccc}
\frac{1}{4} & -\frac{1}{4} & \frac{1}{2}\\
-\frac{1}{4} & \frac{1}{4} & -\frac{1}{2}\\
\frac{1}{2} & -\frac{1}{2} & 1
\end{array}\right),\thinspace(i,j)\in\mathcal{E}.
\]
Then the following linear correlation amongst the entries in $\boldsymbol{x}_{j}(0)$
can be inferred by agent $i$,
\begin{align}
A_{ij}\boldsymbol{x}_{j}(0) & =\frac{1}{4}x_{j1}(0)-\frac{1}{4}x_{j2}(0)+\frac{1}{2}x_{j3}(0)\nonumber \\
 & =\boldsymbol{y}_{j\rightarrow i}(0).\label{eq:correlation-x0}
\end{align}
\end{example}
Therefore, the agent state lifting procedure is employed in our algorithm
design to avoid disclosing the linear correlations amongst the entries
of real agents' states; we summarize this below.

Step 1: Agent State Lifting. 

Each agent $i\in\mathcal{V}$ lifts the state space by introducing
a $d^{\prime}$-dimensional virtual state,
\begin{equation}
\boldsymbol{x}_{i}^{v}=\left(x_{i1}^{v}(k),\ldots,x_{id^{\prime}}^{v}(k)\right)^{\top}\in\mathbb{R}^{d^{\prime}}.
\end{equation}

That is, the state of each agent $i\in\mathcal{V}$ is lifted from
\begin{equation}
\boldsymbol{x}_{i}(k)=(x_{i1}(k),\ldots,x_{id}(k))^{\top}\in\mathbb{R}^{d}
\end{equation}
into
\begin{equation}
\tilde{\boldsymbol{x}}_{i}(k)=(\tilde{x}_{i1}(k),\,\tilde{x}_{i2}(k),\cdots,\,\tilde{x}_{i,d+d^{\prime}}(k))^{\top}\in\mathbb{R}^{d+d^{\prime}},\label{eq:state-augmentation}
\end{equation}
where $\tilde{x}_{i,l_{1}}(k)=x_{il_{1}}^{v}(k)$ for $l_{1}\in\underline{d^{\prime}}$
and $\tilde{x}_{i,d^{\prime}+l_{2}}(k)=x_{il_{2}}(k)$ for $l_{2}\in\underline{d}$. 
\begin{rem}
The virtual state mechanism not only renders the vector-valued PPAC
algorithm proposed in this paper applicable to scalar-weighted networks,
but provides freedom to protect the correlation of entries in the
initial state of each agent from being disclosed.
\end{rem}
For simplicity, in the following discussion, we shall use the symbol
$\boldsymbol{x}_{i}(k)$ to denote the lifted agent $\tilde{\boldsymbol{x}}_{i}(k)$.
Therefore, the virtual state and real state of an agent $i$ in the
lifted agent state vector $\boldsymbol{x}_{i}(k)$ can be represented
by $\boldsymbol{x}_{i}^{[1:d^{\prime}]}(k)$ and $\boldsymbol{x}_{i}^{[d^{\prime}+1:d+d^{\prime}]}(k)$,
respectively.

\subsection{Dynamic Matrix-valued Edge Weight\label{subsec:Dynamic-Matrix-valued-Edge}}

In order to achieve a tradeoff between privacy-preservation and average
consensus, a periodic switching mechanism for matrix-valued edge weights
is employed, where the period satisfies $T=d+d^{\prime}-1$. Here,
the edge weight matrix will become time-dependent and denoted as $A_{ij}(k)$.
The purpose of this design is to guarantee average consensus of the
multi-agent system \eqref{equ:matrix-consensus-overall} \textcolor{black}{and
the matrix-valued edge weights will be given subsequently.} The necessity
of introducing switching mechanism for matrix-valued edge weight is
that the average consensus can not be guaranteed on a connected time-invariant
matrix-weighted network with positive semi-definite matrix-valued
edge weights, preferable for privacy-preservation purposes. We provide
the following example to illustrate this point.
\begin{example}
\label{exa:cluster-consensus}Consider the matrix-weighted network
$\mathcal{G}$ shown in Figure \ref{fig:integral-network}. Choose
the following positive semi-definite matrix
\begin{equation}
A_{ij}=\left(\begin{array}{ccc}
1 & 1 & 0\\
1 & 1 & 0\\
0 & 0 & 0
\end{array}\right),\label{eq:semi-definite-weight}
\end{equation}
\textcolor{black}{as edge weight for all $(i,j)\in\mathcal{E}$ and
$\sigma=0.1$ that satisfies Lemma}\textcolor{red}{{} }\textcolor{black}{\ref{lem:sigma}.}
Although the network in Figure \ref{fig:integral-network} is connected,
the multi-agent system \eqref{equ:matrix-consensus-overall} only
admits cluster consensus; see Figure \ref{fig:cluster}.
\end{example}
\begin{figure}
\begin{centering}
\includegraphics[width=9cm]{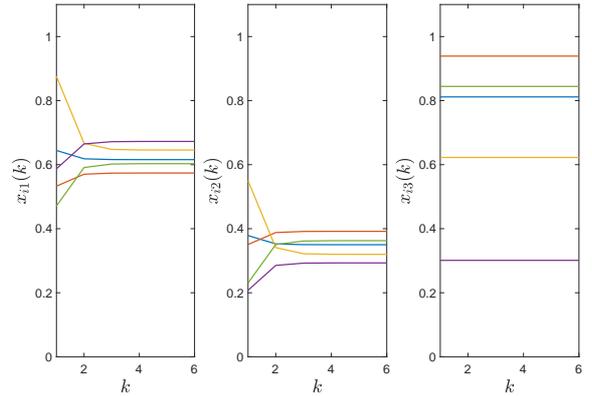}
\par\end{centering}
\caption{Cluster consensus of multi-agent system \eqref{equ:matrix-consensus-overall}
on the connected network in Figure \ref{fig:integral-network} with
the matrix-valued edge weight as in \eqref{eq:semi-definite-weight}.}
\label{fig:cluster}
\end{figure}

According to Lemma \ref{lem:1-1}, the null space of a matrix-valued
Laplacian $L$ is not only determined by the network connectivity,
but also by the null space of weight matrices \citet{pan2018bipartite,trinh2018matrix}.
One can see that $\mathcal{H}$ is determined by the null space of
matrix-valued weights matrices on edges. In this case, the semi-positive
definiteness of static edge weights cannot guarantee that the null
space of a matrix-valued Laplacian is  $\mathcal{R}$, required for
ensuring average consensus under \eqref{equ:matrix-consensus-overall}.
For instance, if one chooses $\boldsymbol{x}(0)\in\mathcal{H}$, then
$\boldsymbol{x}(k)=\boldsymbol{x}(0)$ for any $k\in\mathbb{N}$ and
$\sigma\in\mathbb{R}$, in which case average consensus of multi-agent
\eqref{equ:matrix-consensus-overall} cannot be achieved \citet{pan2018bipartite}.

We shall proceed to present the second ingredient regarding the dynamic
matrix-valued weight design.

Step 2: Dynamic Matrix-valued Edge Weight Design. 

To design dynamic matrix-valued edge weights, we shall employ the
following auxiliary vector set, which is distributed to each agent
at the initialization of the algorithm.

Choose $d^{\prime}\ge3$ and design orthogonal vector set
\begin{equation}
\mathfrak{V}=\left\{ \boldsymbol{v}_{i}\in\mathbb{R}^{d+d^{\prime}}\right\} _{i=1}^{d+d^{\prime}}\label{eq:auxiliary-vector}
\end{equation}
satisfying, 

1) all vectors in $\mathfrak{V}$ are mutually perpendicular,

2) the number of non-zero elements in each $\boldsymbol{v}_{i}\in\mathbb{R}^{d+d^{\prime}}$
is not less than $2$,

3) the number of non-zero entries in $\boldsymbol{v}_{1}$ is less
than $d^{\prime}$,

4) all entries in $\boldsymbol{v}_{d+d^{\prime}}$ are non-zero. 

Based on the vector set $\mathfrak{V}$, we now design the following
mechanism for the periodic switching of matrix-valued edge weights.

For $k=0$, if $(i,j)\in\mathcal{E}$ and $i\in\mathcal{V}_{\mathcal{B}}$
and $j\in\mathcal{N}_{i}\cap\mathcal{V}_{\mathcal{A}}$, choose
\begin{align}
A_{ij}(0) & =A_{ji}(0)\nonumber \\
 & =\alpha_{ij}\frac{\boldsymbol{v}_{1}\boldsymbol{v}_{1}^{\top}}{\boldsymbol{v}_{1}^{\top}\boldsymbol{v}_{1}}+\beta_{ij}\frac{\boldsymbol{v}_{d+d^{\prime}}\boldsymbol{v}_{d+d^{\prime}}^{\top}}{\boldsymbol{v}_{d+d^{\prime}}^{\top}\boldsymbol{v}_{d+d^{\prime}}},\label{eq:edge-weight-matrix-1}
\end{align}
where $\alpha_{ij}>0$ and $\beta_{ij}>0$; else choose $A_{ij}(0)\in\mathbb{R}^{d\times d}$
arbitrarily.

For $k\in\mathbb{Z}_{+}$, if $(i,j)\in\mathcal{E}$, choose
\begin{align}
A_{ij}(k) & =A_{ji}(k)\nonumber \\
 & =\gamma_{ij}^{\rho(k)}\frac{\boldsymbol{v}_{\rho(k)}\boldsymbol{v}_{\rho(k)}^{\top}}{\boldsymbol{v}_{\rho(k)}^{\top}\boldsymbol{v}_{\rho(k)}}+\zeta_{ij}^{\rho(k)}\frac{\boldsymbol{v}_{d+d^{\prime}}\boldsymbol{v}_{d+d^{\prime}}^{\top}}{\boldsymbol{v}_{d+d^{\prime}}^{\top}\boldsymbol{v}_{d+d^{\prime}}},\label{eq:edge-weight-matrix}
\end{align}
where 
\begin{equation}
\rho(k)=\begin{cases}
k\,\text{{\bf mod}}\,d^{*}, & k\,\text{{\bf mod}}\,d^{*}\not=0\\
d^{*}, & k\,\text{{\bf mod}}\,d^{*}=0
\end{cases},\label{eq:gamma1}
\end{equation}

\begin{equation}
\frac{1}{4(n-1)\sigma}>\gamma_{ij}^{\rho(k)}>0,\label{eq:gamma2}
\end{equation}

\begin{equation}
\frac{1}{4(n-1)\sigma}>\zeta_{ij}^{\rho(k)}>0,\label{eq:gamma3}
\end{equation}
and
\[
d^{*}=d+d^{\prime}-1.
\]

The above procedures can be summarized in the Algorithm \ref{MWN-PPAC algorithm}.

\begin{algorithm}
\begin{algorithmic}[1]
\Require{}
\State{Set $k=0$ and choose $d^{\prime}\ge3$.}
\State{Each agent $i\in\mathcal{V}$ constructs $\mathfrak{V}$ in \eqref{eq:auxiliary-vector} and lifts its state.}
\For {$i\in\mathcal{V}$ and $j\in\mathcal{N}_i$}
\If{$ i \in \mathcal{V}_{\mathcal{B}}$ and $j\in \mathcal{N}_i \cap \mathcal{V}_{\mathcal{A}}$}
\State{Construct $A_{ji}(0)$ by \eqref{eq:edge-weight-matrix-1}.}
\Else
\State{Construct arbitrary $A_{ji}(0) \in \mathbb{R}^{d \times d}$.}
\EndIf
\State{Agent $i$ sends $A_{ji}(0)\boldsymbol{x}_{i}(0)$ to $j$.}
\EndFor
\State{Set $k=k+1$}
\Ensure{}
\While{$0<k<\text{max\_iteration\_number}$}
\For {$i\in\mathcal{V}$ and $j\in\mathcal{N}_i$}
\State{Construct $A_{ji}(k)$ by \eqref{eq:edge-weight-matrix}.}
\State{Agent $i$ sends $A_{ji}(k)\boldsymbol{x}_{i}(k)$ to $j$.}
\EndFor
\State{Agent $i$ updates $\boldsymbol{x}_{i}(k)$ using protocol  \eqref{equ:matrix-consensus-protocol}.}
\State{Set $k = k + 1$.}
\EndWhile
\end{algorithmic}\caption{MWN-PPAC algorithm}
\label{MWN-PPAC algorithm}
\end{algorithm}

\begin{rem}[\textcolor{black}{Feasibility of orthogonal vector set $\mathfrak{V}$}]
\label{rem:feasibility of B} Note that the construction of the orthogonal
vector set $\mathfrak{V}$ is always feasible. For convenience, we
provide the following candidate method for constructing $\mathfrak{V}$. 

Choose $d^{\prime}=3$ and one can construct the orthogonal vector
set by choosing
\begin{equation}
\boldsymbol{v}_{i}=(\frac{1}{i},\overset{i-1}{\overbrace{-\frac{1}{i},\ldots,-\frac{1}{i}}},\underset{f_{i}}{\underbrace{1}},\overset{g_{i}}{\overbrace{0,\ldots,0}})^{\top},\label{eq:auxiliary-vector-1}
\end{equation}
for all $i\in\underline{d+3}$ where $g_{i}=\max\left\{ d+2-i,0\right\} $
and $f_{i}=\delta(d+3-i)$. Specifically, consider a matrix-weighted
network $\mathcal{G}$ consisting $n=5$ agents where the state dimension
of each agent is $d=3$. Then, by using \eqref{eq:auxiliary-vector-1},
one can obtain the elements in $\mathfrak{V}=\left\{ \boldsymbol{v}_{i}\in\mathbb{R}^{6}\right\} _{i=1}^{6}$
as follows,
\begin{align}
\boldsymbol{v}_{1} & =\left(1,1,0,0,0,0\right)^{\top},\label{eq:v1}\\
\boldsymbol{v}_{2} & =\left(\frac{1}{2},-\frac{1}{2},1,0,0,0\right)^{\top},\\
\boldsymbol{v}_{3} & =\left(\frac{1}{3},-\frac{1}{3},-\frac{1}{3},1,0,0\right)^{\top},\\
\boldsymbol{v}_{4} & =\left(\frac{1}{4},-\frac{1}{4},-\frac{1}{4},-\frac{1}{4},1,0\right)^{\top},\\
\boldsymbol{v}_{5} & =\left(\frac{1}{5},-\frac{1}{5},-\frac{1}{5},-\frac{1}{5},-\frac{1}{5},1\right)^{\top},
\end{align}
and
\begin{equation}
\boldsymbol{v}_{6}=\left(\frac{1}{6},-\frac{1}{6},-\frac{1}{6},-\frac{1}{6},-\frac{1}{6},-\frac{1}{6}\right)^{\top}.\label{eq:v6}
\end{equation}
\end{rem}
\begin{rem}
According to equations \eqref{eq:edge-weight-matrix-1} to \eqref{eq:gamma1},
the rank of $A_{ij}(k)$ is always $2$ regardless of $d+d^{\prime}$.
Remarkably, the low rank, positive semi-definite matrix-valued edge
weight mechanism is plausible since it renders Algorithm \ref{MWN-PPAC algorithm}
efficient in terms of both communication and computation.
\end{rem}
\begin{rem}
According to Algorithm \ref{MWN-PPAC algorithm}, the observations
of an agent $i\in\mathcal{V}$ satisfies
\begin{equation}
\mathcal{O}_{i}^{\mathcal{M}}(\{\boldsymbol{x}_{l}(0),\mathcal{I}_{l}\}_{l\in\mathcal{V}}))=\left\{ \text{form of Algorithm 1}\right\} \cup\mathcal{P}\cup\mathcal{S}^{i},\label{eq:observation-of-VA}
\end{equation}
where
\begin{equation}
\mathcal{P}=\{\sigma,\mathfrak{V},d,d^{\prime}\}\label{eq:public-to-all}
\end{equation}
and
\begin{equation}
\mathcal{S}^{i}=\{\text{\ensuremath{\boldsymbol{x}_{i}}}(k),\mathcal{N}_{i},A_{ij}(k),A_{ji}(k),\boldsymbol{y}_{j\rightarrow i}(k)\}_{j\in\mathcal{N}_{i},k\in\mathbb{N}}.\label{eq:Si}
\end{equation}
\end{rem}
In the following, we shall first analyze the asymptotic convergence
of the Algorithm \ref{MWN-PPAC algorithm} towards average consensus.

\section{Average Consensus Analysis \label{sec:Consensus}}

In this section, we shall show the asymptotic average consensus of
matrix-weighted network \eqref{equ:matrix-consensus-overall} can
be achieved under Algorithm \ref{MWN-PPAC algorithm}. Since the matrix-valued
edge weights are dynamically switching, the analysis is equivalent
to the average consensus problem of multi-agent system \eqref{equ:matrix-consensus-overall}
on matrix-weighted switching networks; the dynamic edge weights constructed
in Algorithm \ref{MWN-PPAC algorithm} are just a special case of
our convergence analysis for average consensus. To keep the following
presentation concise, we shall occasionally write $A_{k}$ and $L_{k}$
for matrix-valued weight matrix $A(k)$ and matrix-valued Laplacian
$L(k)$, respectively.

The main result in this section is stated in the following theorem.
\begin{thm}
\label{thm:consensus theorem} The matrix-weighted network \eqref{equ:matrix-consensus-overall}
using Algorithm 1 achieves asymptotic average consensus, namely, ${\color{blue}{\color{black}{\displaystyle \lim_{k\rightarrow\infty}}\boldsymbol{x}(k)=\mathds{1}_{n}\otimes\left(\text{Avg}(\boldsymbol{x}(0))\right)}}.$
\end{thm}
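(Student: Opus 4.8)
The plan is to recast the dynamics as an infinite product of symmetric, non-expansive maps and to show that, although each individual map is rank-deficient (so that a single step can only produce the cluster consensus seen in Example~\ref{exa:cluster-consensus}), the composition over one switching period strictly contracts every disagreement direction. Write $W_k=\boldsymbol{I}_{(d+d^{\prime})n}-\sigma L_k$, so $\boldsymbol{x}(k+1)=W_k\boldsymbol{x}(k)$. Two preliminary facts are needed. First, since every weight is assigned symmetrically ($A_{ij}(k)=A_{ji}(k)$), the Laplacian $L_k$ is symmetric and positive semi-definite by Lemma~\ref{lem:1-1}, with $\mathcal{R}=\text{{\bf range}}\{\mathds{1}_{n}\otimes I_{d+d^{\prime}}\}\subseteq\text{{\bf null}}(L_k)$; hence $(\mathds{1}_{n}\otimes I)^{\top}W_k=(\mathds{1}_{n}\otimes I)^{\top}$ and $W_k(\mathds{1}_{n}\otimes\boldsymbol{w})=\mathds{1}_{n}\otimes\boldsymbol{w}$, so the running sum (and therefore the average) is invariant while the consensus subspace $\mathcal{R}$ is fixed. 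Second, the bounds \eqref{eq:gamma2}--\eqref{eq:gamma3} together with a Gershgorin/norm estimate $\lambda_{\max}(L_k)\le 2(n-1)\max_{i,j}\|A_{ij}(k)\|<\tfrac{1}{2\sigma}$ give $0\preceq\sigma L_k\prec\tfrac12\boldsymbol{I}$, so each $W_k$ is symmetric with eigenvalues in $(\tfrac12,1]$; in particular $\|W_k\boldsymbol{z}\|\le\|\boldsymbol{z}\|$ with equality if and only if $\boldsymbol{z}\in\text{{\bf null}}(L_k)$.

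Decompose $\boldsymbol{x}(k)=\mathds{1}_{n}\otimes\bar{\boldsymbol{x}}+\boldsymbol{e}(k)$ with $\bar{\boldsymbol{x}}=\text{Avg}(\boldsymbol{x}(0))$ constant and $\boldsymbol{e}(k)\in\mathcal{R}^{\perp}$; since each $W_k$ fixes $\mathcal{R}$ and is symmetric it preserves $\mathcal{R}^{\perp}$, and $\boldsymbol{e}(k+1)=W_k\boldsymbol{e}(k)$, so it suffices to prove $\boldsymbol{e}(k)\to0$. The core step is a \emph{joint-coverage} claim: for any window of $d^{*}=d+d^{\prime}-1$ consecutive indices $k\ge1$, the product $M=W_{k+d^{*}-1}\cdots W_{k}$ satisfies $\|M\boldsymbol{e}\|<\|\boldsymbol{e}\|$ for every nonzero $\boldsymbol{e}\in\mathcal{R}^{\perp}$. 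To see this, set $\boldsymbol{z}_0=\boldsymbol{e}$ and $\boldsymbol{z}_{m+1}=W_{k+m}\boldsymbol{z}_m$; if $\|M\boldsymbol{e}\|=\|\boldsymbol{e}\|$ then every intermediate inequality is tight, forcing $\boldsymbol{z}_m\in\text{{\bf null}}(L_{k+m})$, hence $\boldsymbol{z}_m=\boldsymbol{e}$ for all $m$ and $\boldsymbol{e}\in\bigcap_{m=0}^{d^{*}-1}\text{{\bf null}}(L_{k+m})$. By Lemma~\ref{lem:1-1}, $L_{k+m}\boldsymbol{e}=0$ is equivalent to $(\boldsymbol{e}_i-\boldsymbol{e}_j)\in\text{{\bf null}}(A_{ij}(k+m))$ for every $(i,j)\in\mathcal{E}$, and by \eqref{eq:edge-weight-matrix} this null space is the orthogonal complement of $\text{{\bf span}}\{\boldsymbol{v}_{\rho(k+m)},\boldsymbol{v}_{d+d^{\prime}}\}$, independent of the edge. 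As $m$ sweeps the window, $\rho(k+m)$ runs through all of $\{1,\dots,d^{*}\}$, so $\boldsymbol{e}_i-\boldsymbol{e}_j$ is orthogonal to $\boldsymbol{v}_1,\dots,\boldsymbol{v}_{d+d^{\prime}-1}$ and to $\boldsymbol{v}_{d+d^{\prime}}$; since $\mathfrak{V}$ in \eqref{eq:auxiliary-vector} is an orthogonal basis of $\mathbb{R}^{d+d^{\prime}}$, this yields $\boldsymbol{e}_i=\boldsymbol{e}_j$ on every edge, and connectivity (Assumption~1) then forces $\boldsymbol{e}\in\mathcal{R}$, contradicting $\boldsymbol{e}\in\mathcal{R}^{\perp}\setminus\{0\}$.

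Finally I would convert this strict contraction into convergence. Because $L_k$ depends on $k\ge1$ only through $\rho(k)$, the sequence $\{W_k\}_{k\ge1}$ is periodic with period $d^{*}$, so the monodromy $M=W_{d^{*}}\cdots W_{1}$ is a fixed operator on $\mathcal{R}^{\perp}$; restricting to the compact unit sphere of $\mathcal{R}^{\perp}$, the claim above gives $\|M|_{\mathcal{R}^{\perp}}\|=c<1$. Then $\boldsymbol{e}(1+p\,d^{*})=M^{p}\boldsymbol{e}(1)$ gives $\|\boldsymbol{e}(1+p\,d^{*})\|\le c^{\,p}\|\boldsymbol{e}(1)\|\to0$, and since every intervening $W_k$ is non-expansive the whole sequence $\boldsymbol{e}(k)\to0$; the single transient step $W_0$ (whose weights, though arbitrary on some edges, are chosen symmetrically so that the average is still preserved) only determines the initial disagreement $\boldsymbol{e}(1)$ and does not affect the limit. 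Hence $\boldsymbol{x}(k)\to\mathds{1}_{n}\otimes\bar{\boldsymbol{x}}=\mathds{1}_{n}\otimes\text{Avg}(\boldsymbol{x}(0))$. The main obstacle is exactly the joint-coverage claim of the second paragraph: here the periodic switching through the full orthogonal basis $\mathfrak{V}$ (rather than any single semi-definite weight) is indispensable, and the delicate points are verifying that the per-step edge null spaces are edge-independent and that one period sweeps the entire basis.
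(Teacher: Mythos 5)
Your proposal is correct and follows essentially the same route as the paper's own proof: non-expansiveness of each step from the step-size bound (the paper's Lemma \ref{lem:sigma}), strict contraction of the one-period transition map on the disagreement subspace via an equality-case argument that forces a vector into the joint null space of the period's Laplacians (the paper's Theorem \ref{thm:eigenvalue inequality}), triviality of that joint null space from the sweep of the orthogonal basis $\mathfrak{V}$ together with connectivity (the paper's Lemma \ref{lem: positivity}), and geometric decay across periods with monotonicity in between. The differences are organizational rather than substantive—you bound the operator norm of the monodromy matrix on $\mathcal{R}^{\perp}$ by compactness where the paper uses Rayleigh quotients and the eigenvalue $\mu_{d+d^{\prime}+1}\left(\varPhi^{\top}\varPhi\right)$, and you bypass the paper's Theorem \ref{thm:nullspace-relationship} and Lemma \ref{lem:1-2} by arguing orthogonality to the full basis directly—while you also make explicit the symmetry of the $k=0$ weights needed to preserve the average at the transient step, an assumption the paper leaves implicit.
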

In order to prove Theorem \ref{thm:consensus theorem}, we need to
establish the following auxiliary results.

First, we note that the structure of the null space of a matrix-valued
Laplacian matrix for the time-invariant network essentially determines
the steady-state of the multi-agent system \eqref{equ:matrix-consensus-overall};
refer to equations \eqref{eq:consensus-space} and \eqref{eq:edge-space}.
According to Lemma \ref{lem:1-1} and Lemma \ref{lem:1-2}, it is
reasonable to examine the null space of the matrix-weighted Laplacian
corresponding to a series of matrix-weighted networks. For $k^{\prime},k^{\prime\prime}\in\mathbb{Z}_{+}$
and $k^{\prime}<k^{\prime\prime}$, the union of graphs $\mathcal{G}(k)=(\mathcal{V},\mathcal{E},A_{k})$
over a time interval $[k^{\prime},k^{\prime\prime}]\subset[0,\infty)$
is, 
\[
\bigcup_{k=k^{\prime}}^{k^{\prime\prime}-1}\mathcal{G}(k)=\left(\mathcal{V},\mathcal{E},\sum_{k=k^{\prime}}^{k^{\prime\prime}-1}A_{k}\right).
\]

The following result reveals the connection between the null space
of matrix-valued Laplacian corresponding to the union of a series
of matrix-weighted networks and the intersection of the null space
of matrix-valued Laplacians corresponding to each separate matrix-weighted
network.
\begin{thm}
\label{thm:nullspace-relationship}Let $\mathcal{G}_{k}=(\mathcal{V},\mathcal{E},A_{k})$
be a matrix-weighted switching network with matrix-weighted Laplacian
$L_{k}$. Then 
\[
\text{{\bf null}}\left(\sum_{i=1}^{k^{\prime\prime}-k^{\prime}}L_{k^{\prime}+i-1}\right)=\mathcal{R}
\]
if and only if 
\[
\underset{i\in\underline{k^{\prime\prime}-k^{\prime}}}{\bigcap}\text{{\bf null}}\left(L_{k^{\prime}+i-1}\right)=\mathcal{R},
\]
where $k^{\prime}<k^{\prime\prime}\in\mathbb{Z}_{+}$.
\end{thm}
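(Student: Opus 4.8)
The plan is to reduce the entire statement to a single linear-algebraic fact about positive semi-definite matrices, after which the claimed equivalence becomes transparent. The key observations are that, by Lemma \ref{lem:1-1}, each matrix-weighted Laplacian $L_{k}$ is positive semi-definite, and that $\mathcal{R}\subseteq\text{{\bf null}}(L_{k})$ for every $k$, since $\mathds{1}_{n}\otimes I_{d}$ always lies in the kernel of any matrix-weighted Laplacian. Consequently the sum $S:=\sum_{i=1}^{k^{\prime\prime}-k^{\prime}}L_{k^{\prime}+i-1}$ is again positive semi-definite, being a sum of positive semi-definite matrices (indeed, by linearity of the map $A_{k}\mapsto L_{k}=D_{k}-A_{k}$, the matrix $S$ is precisely the Laplacian of the union network $\bigcup_{k=k^{\prime}}^{k^{\prime\prime}-1}\mathcal{G}(k)$). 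I would therefore first prove the general identity
\[
\text{{\bf null}}(S)=\bigcap_{i=1}^{k^{\prime\prime}-k^{\prime}}\text{{\bf null}}(L_{k^{\prime}+i-1}),
\]
from which the stated ``if and only if'' follows at once, because both sides of the equivalence simply assert that this common set equals $\mathcal{R}$.

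To establish that identity, I would argue through the associated quadratic form. The main step is the elementary fact that for a positive semi-definite matrix $M$ one has $M\boldsymbol{v}=\boldsymbol{0}$ if and only if $\boldsymbol{v}^{\top}M\boldsymbol{v}=0$; this follows from the spectral decomposition $M=\sum_{\ell}\lambda_{\ell}\boldsymbol{u}_{\ell}\boldsymbol{u}_{\ell}^{\top}$ with $\lambda_{\ell}\geq0$, so that $\boldsymbol{v}^{\top}M\boldsymbol{v}=\sum_{\ell}\lambda_{\ell}(\boldsymbol{u}_{\ell}^{\top}\boldsymbol{v})^{2}$ vanishes precisely when $\boldsymbol{u}_{\ell}^{\top}\boldsymbol{v}=0$ for every $\ell$ with $\lambda_{\ell}>0$, which is exactly the condition $M\boldsymbol{v}=\boldsymbol{0}$. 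Applying this to $S\succeq0$, a vector $\boldsymbol{v}$ lies in $\text{{\bf null}}(S)$ if and only if
\[
\boldsymbol{v}^{\top}S\boldsymbol{v}=\sum_{i=1}^{k^{\prime\prime}-k^{\prime}}\boldsymbol{v}^{\top}L_{k^{\prime}+i-1}\boldsymbol{v}=0.
\]
Since every summand is nonnegative, the sum vanishes if and only if each summand vanishes, i.e. $\boldsymbol{v}^{\top}L_{k^{\prime}+i-1}\boldsymbol{v}=0$ for all $i$, which by the same fact is equivalent to $L_{k^{\prime}+i-1}\boldsymbol{v}=\boldsymbol{0}$ for all $i$, that is, $\boldsymbol{v}\in\bigcap_{i}\text{{\bf null}}(L_{k^{\prime}+i-1})$.

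With the set identity in hand, the proof concludes immediately: if $\text{{\bf null}}(S)=\mathcal{R}$ then the intersection equals $\mathcal{R}$, and conversely. I do not anticipate a genuine obstacle here, since the argument rests only on positive semi-definiteness and the nonnegativity of each quadratic form $\boldsymbol{v}^{\top}L_{k}\boldsymbol{v}$. The one point requiring care is to invoke the positive semi-definiteness of each $L_{k}$ explicitly, as the identity $\text{{\bf null}}(\sum_{i}M_{i})=\bigcap_{i}\text{{\bf null}}(M_{i})$ is generally false for indefinite matrices; it is exactly the sign-definiteness guaranteed by Lemma \ref{lem:1-1} that forces the cancellation-free behavior exploited above.
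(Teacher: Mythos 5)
Your proof is correct and takes essentially the same route as the paper: the paper's necessity/sufficiency arguments by contradiction amount to precisely the two containments you establish, namely the trivial inclusion $\bigcap_{i}\text{{\bf null}}(L_{k^{\prime}+i-1})\subseteq\text{{\bf null}}(S)$ and the reverse inclusion obtained from positive semi-definiteness, which forces each quadratic form $\boldsymbol{\eta}^{\top}L_{k^{\prime}+i-1}\boldsymbol{\eta}$ to vanish and hence each $L_{k^{\prime}+i-1}\boldsymbol{\eta}=\boldsymbol{0}$. The set identity $\text{{\bf null}}(S)=\bigcap_{i}\text{{\bf null}}(L_{k^{\prime}+i-1})$ that you isolate as the core fact is even stated explicitly in the paper's remark following the theorem, so the only difference is organizational: a direct proof of the set equality rather than two proofs by contradiction.
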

\begin{proof}
(Necessity) From the definition of matrix-valued Laplacian matrix,
one has 
\[
\mathcal{R}\subseteq\underset{i\in\underline{k^{\prime\prime}-k^{\prime}}}{\bigcap}\text{{\bf null}}\left(L_{k^{\prime}+i-1}\right).
\]
Assume that $\underset{i\in\underline{k^{\prime\prime}-k^{\prime}}}{\bigcap}\text{{\bf null}}\left(L_{k^{\prime}+i-1}\right)\neq\mathcal{R}$;
then there exists an $\boldsymbol{\eta}\notin\mathcal{R}$ such that
$L_{k^{\prime}+i-1}\boldsymbol{\eta}=\boldsymbol{0}$ for all $i\in\underline{k^{\prime\prime}-k^{\prime}}$,
which would imply,
\begin{align*}
\left(\sum_{i=1}^{k^{\prime\prime}-k^{\prime}}L_{k^{\prime}+i-1}\right)\boldsymbol{\eta} & =\sum_{i=1}^{k^{\prime\prime}-k^{\prime}}\left(L_{k^{\prime}+i-1}\boldsymbol{\eta}\right)\\
 & =\boldsymbol{0},
\end{align*}
contradicting the fact that 
\[
\text{{\bf null}}\left(\sum_{i=1}^{k^{\prime\prime}-k^{\prime}}L_{k^{\prime}+i-1}\right)=\mathcal{R}.
\]
Therefore, 
\[
\underset{i\in\underline{k^{\prime\prime}-k^{\prime}}}{\bigcap}\text{{\bf null}}\left(L_{k^{\prime}+i-1}\right)=\mathcal{R}.
\]

(Sufficiency) Assume that 
\[
\text{{\bf null}}\left(\sum_{i=1}^{k^{\prime\prime}-k^{\prime}}L_{k^{\prime}+i-1}\right)\neq\mathcal{R}.
\]
Then there exists $\boldsymbol{\eta}\notin\mathcal{R}$ such that
\[
\left(\sum_{i=1}^{k^{\prime\prime}-k^{\prime}}L_{k^{\prime}+i-1}\right)\boldsymbol{\eta}=\boldsymbol{0}.
\]
Hence, 
\[
\boldsymbol{\eta}^{\top}\left(\sum_{i=1}^{k^{\prime\prime}-k^{\prime}}L_{k^{\prime}+i-1}\right)\boldsymbol{\eta}=0
\]
implying that
\begin{align*}
 & \boldsymbol{\eta}^{\top}\left(\sum_{i=1}^{k^{\prime\prime}-k^{\prime}}L_{k^{\prime}+i-1}\right)\boldsymbol{\eta}=\sum_{i=1}^{k^{\prime\prime}-k^{\prime}}\boldsymbol{\eta}^{\top}L_{k^{\prime}+i-1}\boldsymbol{\eta}=0.
\end{align*}
We note that $L_{k^{\prime}+i-1}$ is positive semi-definite for all
$i\in\underline{k^{\prime\prime}-k^{\prime}}$, and $\boldsymbol{\eta}^{\top}L_{k^{\prime}+i-1}\boldsymbol{\eta}=0$,
one has $L_{k^{\prime}+i-1}\boldsymbol{\eta}=\boldsymbol{0}$; this,
on the other hand, contradicts the premise that 
\[
\underset{i\in\underline{k^{\prime\prime}-k^{\prime}}}{\bigcap}\text{{\bf null}}\left(L_{k^{\prime}+i-1}\right)=\mathcal{R}.
\]
Thus 
\[
\text{{\bf null}}\left(\sum_{i=1}^{k^{\prime\prime}-k^{\prime}}L_{k^{\prime}+i-1}\right)=\mathcal{R}.
\]
\end{proof}
\begin{rem}
Remarkably, Theorem \ref{thm:nullspace-relationship} establishes
a quantitative connection between the null space of matrix-valued
Laplacian of a set of matrix-weighted networks and that of their union.
This is eventually a result that is valid for general switching matrix-weighted
networks whose edge weight matrices can be either positive definite
or positive semi-definite. Essentially, the null space of matrix-valued
Laplacian of a set of matrix-weighted networks is equal to that of
their union, i.e., $\text{{\bf null}}\left(\sum_{i=1}^{k^{\prime\prime}-k^{\prime}}L_{k^{\prime}+i-1}\right)=\underset{i\in\underline{k^{\prime\prime}-k^{\prime}}}{\bigcap}\text{{\bf null}}\left(L_{k^{\prime}+i-1}\right)$.
\end{rem}
We now proceed to prove the asymptotic average consensus of the matrix-weighted
network \eqref{equ:matrix-consensus-overall} under Algorithm 1. The
idea is to construct the error vector 
\begin{equation}
\boldsymbol{\omega}(k)=\boldsymbol{x}(k)-\mathds{1}_{n}\otimes\left(\text{Avg}(\boldsymbol{x}(0))\right),\label{eq:error-vector}
\end{equation}
and then show that $\boldsymbol{\omega}(k)$ converges to the origin
as $k$ goes to infinity.

To this end, we first need to employ the state transition matrix from
time $k^{\prime}$ to time $k^{\prime\prime}$, denoted by 
\[
\varPhi(k^{\prime\prime},k^{\prime})=\prod_{i=1}^{k^{\prime\prime}-k^{\prime}}\left(I-\sigma L_{k^{\prime}+i-1}\right).
\]
Then $\boldsymbol{x}(k^{\prime\prime})=\varPhi(k^{\prime\prime},k^{\prime})\boldsymbol{x}(k^{\prime})$,
where $k^{\prime}<k^{\prime\prime}\in\mathbb{Z}_{+}$. 

Note that the matrix-valued Laplacian matrix $L_{k}$ has at least
$d$ zero eigenvalues. Let $\lambda_{1}\leq\lambda_{2}\leq\cdots\leq\lambda_{dn}$
be the eigenvalues of $L_{k}$. Then we have
\[
0=\lambda_{1}=\cdots=\lambda_{d}\leq\lambda_{d+1}\le\cdots\leq\lambda_{dn}.
\]
Let $\beta_{1}\geq\beta_{2}\geq\cdots\geq\beta_{dn}$ denote the eigenvalues
of $I-\sigma L_{k}$; then $\beta_{i}(I-\sigma L_{k})=1-\sigma\lambda_{i}(L_{k})$,
namely,
\[
1=\beta_{1}=\cdots=\beta_{d}\geq\beta_{d+1}\geq\cdots\geq\beta_{dn}.
\]
In the meantime, the eigenvector corresponding to the eigenvalue $\beta_{i}(I-\sigma L_{k})$
is equal to that associated with $\lambda_{i}(L_{k})$. 

Note that $\varPhi(k^{\prime\prime},k^{\prime})^{\top}\varPhi(k^{\prime\prime},k^{\prime})$
has at least $d$ eigenvalues at $1$. Let $\mu_{j}$ be the eigenvalues
of $\varPhi(t_{k^{\prime\prime}},t_{k^{\prime}})^{\top}\varPhi(t_{k^{\prime\prime}},t_{k^{\prime}})$,
where $j\in\underline{dn}$ such that $\mu_{1}=\cdots=\mu_{d}=1$
and $\mu_{d+1}\geq\mu_{d+2}\geq\cdots\geq\mu_{dn}$. 

We now present the following lemma to establish the connection between
the null space of the matrix 
\[
\sum_{i=1}^{k^{\prime\prime}-k^{\prime}}L_{k^{\prime}+i-1},
\]
and the specific eigenvalue $\mu_{d+1}$ of $\varPhi(k^{\prime\prime},k^{\prime})^{\top}\varPhi(k^{\prime\prime},k^{\prime})$
which turns out to be crucial for determining the convergence of error
vector $\boldsymbol{\omega}(k)$ in \eqref{eq:error-vector}. 
\begin{thm}
\label{thm:eigenvalue inequality}Let $\mathcal{G}_{k}=(\mathcal{V},\mathcal{E},A_{k})$
be a matrix-weighted switching network with matrix-weighted Laplacian
$L_{k}$. Then 
\[
\text{{\bf null}}\left(\sum_{i=1}^{k^{\prime\prime}-k^{\prime}}L_{k^{\prime}+i-1}\right)=\mathcal{R}
\]
if and only if 
\[
\mu_{d+1}\left(\varPhi(k^{\prime\prime},k^{\prime})^{\top}\varPhi(k^{\prime\prime},k^{\prime})\right)<1,
\]
where $k^{\prime}<k^{\prime\prime}\in\mathbb{Z}_{+}$.
\end{thm}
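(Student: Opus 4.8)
The plan is to reduce the spectral condition on $\varPhi(k'',k')^{\top}\varPhi(k'',k')$ to the common null space of the individual Laplacians, and then invoke Theorem \ref{thm:nullspace-relationship} to translate that into the stated condition on $\text{{\bf null}}(\sum_{i}L_{k'+i-1})$. Writing $m=k''-k'$ and $M_{i}=I-\sigma L_{k'+i-1}$, so that $\varPhi(k'',k')=M_{m}\cdots M_{1}$, I would first record the consequences of the admissible choice of $\sigma$ (cf. Lemma \ref{lem:sigma}): each $M_{i}$ is symmetric positive semi-definite with spectral norm at most $1$, and its eigenspace for eigenvalue $1$ is exactly $\text{{\bf null}}(L_{k'+i-1})$. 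Consequently $\|\varPhi(k'',k')\|\leq1$, so $\varPhi^{\top}\varPhi\preceq I$ and every $\mu_{j}\in[0,1]$; moreover, since $\mathcal{R}\subseteq\text{{\bf null}}(L_{k})$ for every $k$, the product $\varPhi$ fixes $\mathcal{R}$ pointwise, which accounts for the $d$ unit eigenvalues $\mu_{1}=\cdots=\mu_{d}=1$.

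The crux of the argument is to identify the eigenspace of $\varPhi^{\top}\varPhi$ at eigenvalue $1$. I claim that
\[
\ker\!\left(I-\varPhi(k'',k')^{\top}\varPhi(k'',k')\right)=\bigcap_{i=1}^{m}\text{{\bf null}}(L_{k'+i-1}).
\]
The inclusion $\supseteq$ is immediate: if $\boldsymbol{v}$ lies in every $\text{{\bf null}}(L_{k'+i-1})$, then $M_{i}\boldsymbol{v}=\boldsymbol{v}$ for all $i$, whence $\varPhi\boldsymbol{v}=\boldsymbol{v}$ and, by symmetry of each $M_{i}$, also $\varPhi^{\top}\boldsymbol{v}=\boldsymbol{v}$, so $\varPhi^{\top}\varPhi\boldsymbol{v}=\boldsymbol{v}$. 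For the reverse inclusion I would use a telescoping contraction estimate. If $\varPhi^{\top}\varPhi\boldsymbol{v}=\boldsymbol{v}$ then $\|\varPhi\boldsymbol{v}\|=\|\boldsymbol{v}\|$; writing $\boldsymbol{w}_{0}=\boldsymbol{v}$ and $\boldsymbol{w}_{i}=M_{i}\boldsymbol{w}_{i-1}$, the chain
\[
\|\boldsymbol{v}\|=\|\boldsymbol{w}_{m}\|\leq\|\boldsymbol{w}_{m-1}\|\leq\cdots\leq\|\boldsymbol{w}_{0}\|=\|\boldsymbol{v}\|
\]
forces every inequality $\|M_{i}\boldsymbol{w}_{i-1}\|\leq\|\boldsymbol{w}_{i-1}\|$ to be an equality. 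Since each $M_{i}$ is symmetric positive semi-definite with eigenvalues in $[0,1]$, a decomposition of $\boldsymbol{w}_{i-1}$ in the eigenbasis of $M_{i}$ shows that equality $\|M_{i}\boldsymbol{w}_{i-1}\|=\|\boldsymbol{w}_{i-1}\|$ forces $\boldsymbol{w}_{i-1}$ into the eigenspace of $M_{i}$ for eigenvalue $1$, i.e. $M_{i}\boldsymbol{w}_{i-1}=\boldsymbol{w}_{i-1}$; an induction then gives $\boldsymbol{w}_{i-1}=\boldsymbol{v}$ and $M_{i}\boldsymbol{v}=\boldsymbol{v}$ for every $i$, so $\boldsymbol{v}\in\bigcap_{i}\text{{\bf null}}(L_{k'+i-1})$, proving the claim.

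With the eigenspace identified, the equivalence follows from a dimension count. Because $\mu_{1}=\cdots=\mu_{d}=1$ and all $\mu_{j}\leq1$, the condition $\mu_{d+1}<1$ holds if and only if eigenvalue $1$ of $\varPhi^{\top}\varPhi$ has multiplicity exactly $d$, i.e. $\dim\ker(I-\varPhi^{\top}\varPhi)=d$. By the claim this dimension equals $\dim\bigcap_{i}\text{{\bf null}}(L_{k'+i-1})$; since $\mathcal{R}$ is always contained in this intersection and $\dim\mathcal{R}=d$, the intersection has dimension $d$ precisely when it coincides with $\mathcal{R}$. Thus $\mu_{d+1}<1$ if and only if $\bigcap_{i}\text{{\bf null}}(L_{k'+i-1})=\mathcal{R}$, and Theorem \ref{thm:nullspace-relationship} converts this into $\text{{\bf null}}(\sum_{i}L_{k'+i-1})=\mathcal{R}$, settling both directions.

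I expect the main obstacle to be the reverse inclusion in the eigenspace characterization: one must rule out cancellations in the matrix product that could preserve the norm of $\varPhi\boldsymbol{v}$ without each factor individually fixing $\boldsymbol{v}$. This is exactly where positive semi-definiteness of the factors — equivalently, the absence of an eigenvalue $-1$ in each $M_{i}$, guaranteed by the admissible range of $\sigma$ — is indispensable; for indefinite symmetric contractions the telescoping equality argument would break down, and $\ker(I-\varPhi^{\top}\varPhi)$ could be strictly larger than the common null space, invalidating the dimension count.
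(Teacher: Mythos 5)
Your proposal is correct and takes essentially the same route as the paper: the engine in both is the telescoping norm-equality chain $\parallel\boldsymbol{v}\parallel=\parallel M_{m}\cdots M_{1}\boldsymbol{v}\parallel\leq\cdots\leq\parallel\boldsymbol{v}\parallel$, where each factor $M_{i}=I-\sigma L_{k^{\prime}+i-1}$ is symmetric with eigenvalues in $(0,1]$ by Lemma \ref{lem:sigma}, so that norm preservation forces every factor to fix the vector, combined with the identification of the common null space via Theorem \ref{thm:nullspace-relationship}. Your packaging—stating the eigenspace identity $\ker\left(I-\varPhi^{\top}\varPhi\right)=\bigcap_{i}\text{{\bf null}}\left(L_{k^{\prime}+i-1}\right)$ explicitly and finishing with a dimension count—is a tidier organization of what the paper carries out as two separate Rayleigh-quotient contradiction arguments, but the mathematical content is the same.
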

\begin{proof}
(Sufficiency) Assume that $\text{{\bf null}}\left(\sum_{i=1}^{k^{\prime\prime}-k^{\prime}}L_{k^{\prime}+i-1}\right)\neq\mathcal{R}$;
then according to Rayleigh theorem (\citeauthor[Theorem 4.2.2,  p.235]{horn2012matrix}),
there exists an $\boldsymbol{\eta}\notin\mathcal{R}$ such that $L_{k^{\prime}+i-1}\boldsymbol{\eta}=\boldsymbol{0}$
for all $i\in\underline{k^{\prime\prime}-k^{\prime}}$. Thus one can
obtain $\left(I-\sigma L_{k^{\prime}+i-1}\right)\boldsymbol{\eta}=\boldsymbol{\eta}$
for all $i\in\underline{k^{\prime\prime}-k^{\prime}}$ and $\varPhi(k^{\prime\prime},k^{\prime})\boldsymbol{\eta}=\boldsymbol{\eta}$.
Using Rayleigh theorem again, one has
\begin{eqnarray*}
 &  & \mu_{d+1}\left(\varPhi(k^{\prime\prime},k^{\prime})^{\top}\varPhi(k^{\prime\prime},k^{\prime})\right)\\
 & \geq & \frac{\boldsymbol{\eta}^{\top}\varPhi(k^{\prime\prime},k^{\prime})^{\top}\varPhi(k^{\prime\prime},k^{\prime})\boldsymbol{\eta}}{\boldsymbol{\eta}^{\top}\boldsymbol{\eta}}=1,
\end{eqnarray*}
contradicting,
\[
\mu_{d+1}\left(\varPhi(k^{\prime\prime},k^{\prime})^{\top}\varPhi(k^{\prime\prime},k^{\prime})\right)<1.
\]
Therefore $\text{{\bf null}}\left(\sum_{i=1}^{k^{\prime\prime}-k^{\prime}}L_{k^{\prime}+i-1}\right)=\mathcal{R}$
holds.

(Necessity) Assume that $\mu_{d+1}\left(\varPhi(k^{\prime\prime},k^{\prime})^{\top}\varPhi(k^{\prime\prime},k^{\prime})\right)\geq1$.
Again, by Rayleigh theorem, there exists a $\boldsymbol{\eta}\notin\mathcal{R}$
and $\boldsymbol{\eta}\neq\boldsymbol{0}$ such that
\begin{align*}
 & \mu_{d+1}\left(\varPhi(k^{\prime\prime},k^{\prime})^{\top}\varPhi(k^{\prime\prime},k^{\prime})\right)\\
 & =\frac{\boldsymbol{\eta}^{\top}\varPhi(k^{\prime\prime},k^{\prime})^{\top}\varPhi(k^{\prime\prime},k^{\prime})\boldsymbol{\eta}}{\boldsymbol{\eta}^{\top}\boldsymbol{\eta}}\geq1.
\end{align*}
Thus
\[
\parallel\boldsymbol{\eta}\parallel\leq\parallel\varPhi(k^{\prime\prime},k^{\prime})\boldsymbol{\eta}\parallel.
\]

Let $\boldsymbol{\eta}_{k^{\prime}}=\boldsymbol{\eta}$ and $\boldsymbol{\eta}_{k^{\prime}+i}=\left(I-\sigma L_{k^{\prime}+i-1}\right)\boldsymbol{\eta}_{k^{\prime}+i-1}$
for $i\in\underline{k^{\prime\prime}-k^{\prime}}$. Due to the fact
$\beta_{j}\left(I-\sigma L_{k^{\prime}+i-1}\right)\leq1$ \textcolor{black}{and
$\lambda_{j}(L_{k^{\prime}+i-1})<\frac{1}{\sigma}$ (according to
Lemma \ref{lem:sigma} in $\mathsection$\ref{sec:Consensus}}) for
$j\in\underline{dn}$ and $i\in\underline{k^{\prime\prime}-k^{\prime}}$,
then 
\[
\parallel\left(I-\sigma L_{k^{\prime}+i-1}\right)\boldsymbol{\eta}_{k^{\prime}+i-1}\parallel\leq\parallel\boldsymbol{\eta}_{k^{\prime}+i-1}\parallel,
\]
which implies that,
\begin{eqnarray*}
\parallel\boldsymbol{\eta}\parallel & \leq & \parallel\varPhi(k^{\prime\prime},k^{\prime})\boldsymbol{\eta}\parallel\\
 & = & \parallel\boldsymbol{\eta}_{k^{\prime\prime}}\parallel\leq\parallel\boldsymbol{\eta}_{k^{\prime\prime}-1}\parallel\leq\ldots\leq\parallel\boldsymbol{\eta}_{k^{\prime}}\parallel\\
 & = & \parallel\boldsymbol{\eta}\parallel.
\end{eqnarray*}
Hence, 
\[
\parallel\left(I-\sigma L_{k^{\prime}+i-1}\right)\boldsymbol{\eta}_{k^{\prime}+i-1}\parallel=\parallel\boldsymbol{\eta}_{k^{\prime}+i-1}\parallel,
\]
 for any $i\in\underline{k^{\prime\prime}-k^{\prime}}$. Then 
\begin{eqnarray*}
 & \boldsymbol{\eta}_{k^{\prime}+i-1}^{\top} & \left(I-\sigma L_{k^{\prime}+i-1}\right)\left(I-\sigma L_{k^{\prime}+i-1}\right)\boldsymbol{\eta}_{k^{\prime}+i-1}\\
= & \boldsymbol{\eta}_{k^{\prime}+i-1}^{\top} & \boldsymbol{\eta}_{k^{\prime}+i-1}.
\end{eqnarray*}
By Rayleigh theorem, 
\[
\left(I-\sigma L_{k^{\prime}+i-1}\right)\boldsymbol{\eta}_{k^{\prime}+i-1}=\boldsymbol{\eta}_{k^{\prime}+i-1},
\]
and thereby $L_{\rho(k^{\prime}+i-1)}\boldsymbol{\eta}_{k^{\prime}+i-1}=\boldsymbol{0}.$
Thus 
\[
\boldsymbol{\eta}_{k^{\prime}+i-1}\in\text{{\bf null}}(L_{k^{\prime}+i-1}).
\]
Since,
\begin{eqnarray*}
\parallel\boldsymbol{\eta}_{k^{\prime}+i} & - & \boldsymbol{\eta}_{k^{\prime}+i-1}\parallel\\
 & = & \parallel\left(I-\sigma L_{k^{\prime}+i-1}\right)\boldsymbol{\eta}_{k^{\prime}+i-1}-\boldsymbol{\eta}_{k^{\prime}+i-1}\parallel\\
 & = & \parallel\sigma L_{k^{\prime}+i-1}\boldsymbol{\eta}_{k^{\prime}+i-1}\parallel\\
 & = & 0,
\end{eqnarray*}
one can further obtain $\boldsymbol{\eta}_{k^{\prime}+i-1}=\boldsymbol{\eta}_{k^{\prime}+i}$
for any $i\in\underline{k^{\prime\prime}-k^{\prime}}$, which implies
that 
\[
\boldsymbol{\eta}\in\underset{i\in\underline{k^{\prime\prime}-k^{\prime}}}{\bigcap}\text{{\bf null}}(L_{k^{\prime}+i-1})
\]
 and 
\[
\text{{\bf null}}\left(\sum_{i=1}^{k^{\prime\prime}-k^{\prime}}L_{k^{\prime}+i-1}\right)\neq\mathcal{R},
\]
which is a contradiction. The proof is thus concluded.
\end{proof}
\begin{rem}
Similarly, Theorem \ref{thm:eigenvalue inequality} is also a rather
general result that is valid for switching matrix-weighted networks
whose edge weight matrices can be either positive definite or positive
semi-definite.
\end{rem}
We are now ready to examine the connection between achieving asymptotic
average consensus and the dynamic matrix-valued edge weights constructed
by Algorithm 1. In Algorithm 1, $\rho(k)$ is employed to denote the
periodic switching signal for matrix-valued edge weights. Then, the
matrix-valued weight matrix at time $k>0$ satisfies $A(k)=A(\rho(k))$
and the matrix-valued Laplacian also has $L(k)=L(\rho(k))$. The set
of candidate matrix-valued edge weights is hence,
\begin{equation}
\mathbb{A}_{\text{c}}=\left\{ A(1),A(2),\cdots,A(d+d^{\prime}-1)\right\} .
\end{equation}
We proceed to present the following lemma related to the null space
of $L_{\rho(k)}$ and the existence of a positive spanning tree in
a periodic switching matrix-weighted network constructed by Algorithm
1.
\begin{lem}
\label{lem: positivity}Let $L_{\rho(k)}$ be the matrix-valued Laplacian
corresponding to matrix-valued edge weights constructed in Algorithm
1. Then 
\begin{equation}
\text{{\bf null}}\left(\sum_{k=1}^{d+d^{\prime}-1}L_{\rho(k)}\right)=\mathcal{R}.
\end{equation}
\end{lem}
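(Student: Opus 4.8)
The plan is to invoke Theorem~\ref{thm:nullspace-relationship}, which immediately reduces the claim to showing that the intersection of the per-instant null spaces over one full period equals the consensus space, namely
\[
\bigcap_{k=1}^{d+d^{\prime}-1}\text{{\bf null}}\left(L_{\rho(k)}\right)=\mathcal{R}.
\]
Since every $L_{\rho(k)}$ fixes $\mathcal{R}$ (consensus vectors have zero block-differences, hence lie in each edge-weight null space), the inclusion $\mathcal{R}\subseteq\bigcap_{k}\text{{\bf null}}(L_{\rho(k)})$ is automatic; the content is the reverse inclusion.

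First I would pin down the null space of a single edge weight. By \eqref{eq:edge-weight-matrix}, each $A_{ij}(\rho(k))$ is a sum of two rank-one terms along the mutually orthogonal directions $\boldsymbol{v}_{\rho(k)}$ and $\boldsymbol{v}_{d+d^{\prime}}$ with strictly positive coefficients $\gamma_{ij}^{\rho(k)},\zeta_{ij}^{\rho(k)}>0$ from \eqref{eq:gamma2}--\eqref{eq:gamma3}. Using the orthogonality of $\mathfrak{V}$, a vector $\boldsymbol{u}$ is annihilated by $A_{ij}(\rho(k))$ if and only if it is orthogonal to both $\boldsymbol{v}_{\rho(k)}$ and $\boldsymbol{v}_{d+d^{\prime}}$, so
\[
\text{{\bf null}}\left(A_{ij}(\rho(k))\right)=\text{{\bf span}}\{\boldsymbol{v}_{l}:l\in\underline{d+d^{\prime}},\,l\neq\rho(k),\,l\neq d+d^{\prime}\},
\]
which is crucially independent of the particular edge $(i,j)$. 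Then Lemma~\ref{lem:1-1} tells us $\boldsymbol{w}=(\boldsymbol{w}_{1}^{\top},\ldots,\boldsymbol{w}_{n}^{\top})^{\top}\in\text{{\bf null}}(L_{\rho(k)})$ precisely when $\boldsymbol{w}_{i}-\boldsymbol{w}_{j}\in\text{{\bf null}}(A_{ij}(\rho(k)))$ for every edge.

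Next I would exploit the periodic schedule \eqref{eq:gamma1}: as $k$ runs over $1,\ldots,d+d^{\prime}-1$, the index $\rho(k)$ takes each value in $\{1,\ldots,d+d^{\prime}-1\}$ exactly once. Hence any $\boldsymbol{w}$ in the intersection must satisfy, for every edge $(i,j)$,
\[
\boldsymbol{w}_{i}-\boldsymbol{w}_{j}\in\bigcap_{m=1}^{d+d^{\prime}-1}\text{{\bf span}}\{\boldsymbol{v}_{l}:l\neq m,\,l\neq d+d^{\prime}\}.
\]
A vector lying in every one of these subspaces has zero component along $\boldsymbol{v}_{m}$ for all $m\in\{1,\ldots,d+d^{\prime}-1\}$ and along $\boldsymbol{v}_{d+d^{\prime}}$, and since $\mathfrak{V}$ is an orthogonal basis of $\mathbb{R}^{d+d^{\prime}}$ this forces the vector to be $\boldsymbol{0}$. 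Thus $\boldsymbol{w}_{i}=\boldsymbol{w}_{j}$ across every edge, and the connectivity of $\mathcal{G}$ (Assumption~1) propagates this equality to all nodes, so $\boldsymbol{w}\in\mathcal{R}$. Combined with Theorem~\ref{thm:nullspace-relationship}, this yields the lemma.

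I expect the only delicate points to be the computation of $\text{{\bf null}}(A_{ij}(\rho(k)))$---specifically, verifying that the strict positivity of the gains is exactly what guarantees the rank-two matrix has range $\text{{\bf span}}\{\boldsymbol{v}_{\rho(k)},\boldsymbol{v}_{d+d^{\prime}}\}$ and not something smaller---and the bookkeeping that $\rho$ enumerates all of $\underline{d+d^{\prime}-1}$ while the direction $\boldsymbol{v}_{d+d^{\prime}}$ is always present, so that together the $d+d^{\prime}$ orthogonal vectors exhaust $\mathbb{R}^{d+d^{\prime}}$ and leave no room for a nonzero block-difference.
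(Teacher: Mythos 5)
Your proposal is correct, but it routes the conclusion through different machinery than the paper does. The shared core is identical: both you and the paper compute that each $A_{ij}(k)$ annihilates exactly $\text{{\bf span}}\left\{ \boldsymbol{v}_{l}:l\neq\rho(k),\,l\neq d+d^{\prime}\right\}$, and that because $\rho$ enumerates all of $\underline{d+d^{\prime}-1}$ over one period, the common null space of the edge weights over a period is $\left\{ \boldsymbol{0}\right\}$. Where you diverge is in converting this fact into the statement about $\text{{\bf null}}\left(\sum_{k}L_{\rho(k)}\right)$. The paper stays at the adjacency level: since the $A_{ij}(k)$ are positive semi-definite with trivial common null space, the summed weight $\sum_{k=1}^{d+d^{\prime}-1}A_{ij}(k)$ on each edge is positive definite, so the union graph $\bigcup_{k}\mathcal{G}(k)$ has a positive spanning tree, and Lemma \ref{lem:1-2} applied to that union graph (whose Laplacian is exactly $\sum_{k}L_{\rho(k)}$) finishes the proof in one stroke. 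You instead work at the Laplacian level: you invoke Theorem \ref{thm:nullspace-relationship} to reduce to the intersection $\bigcap_{k}\text{{\bf null}}\left(L_{\rho(k)}\right)$, use Lemma \ref{lem:1-1} to express each of these null spaces via block-differences, and then argue by connectivity that forcing every block-difference to zero yields $\mathcal{R}$. Both arguments are sound; the paper's is shorter because Lemma \ref{lem:1-2} packages the spanning-tree-to-null-space step, while yours is more self-contained in that it never needs the positive-spanning-tree notion and makes explicit use of Theorem \ref{thm:nullspace-relationship}, which the paper reserves for the proof of Theorem \ref{thm:consensus theorem}. One small bookkeeping point worth stating explicitly in your write-up: Theorem \ref{thm:nullspace-relationship} is phrased for consecutively indexed Laplacians $L_{k^{\prime}},\ldots,L_{k^{\prime\prime}-1}$, and it applies here because $\rho(k)=k$ for $k\in\underline{d+d^{\prime}-1}$, so your family is literally $L_{1},\ldots,L_{d+d^{\prime}-1}$.
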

\begin{proof}
According to the construction $\left\{ A_{ij}(k)|k>0\right\} $, one
can see that: for $k>0$, $\boldsymbol{v}_{\rho(k)}\in\mathfrak{V}$
is an eigenvector corresponding to the eigenvalue $\gamma_{ij}^{\rho(k)}$
of matrix $A_{ij}(k)$ and $\boldsymbol{v}_{d+d^{\prime}}\in\mathfrak{V}$
is an eigenvector corresponding to the eigenvalue $\zeta_{ij}^{\rho(k)}$
of matrix $A_{ij}(k)$ and elements in $\mathfrak{V}\setminus\left\{ \boldsymbol{v}_{\rho(k)},\,\boldsymbol{v}_{d+d^{\prime}}\right\} $
are eigenvectors corresponding to zero eigenvalue of matrix $A_{ij}(k)$,
i.e., $\text{{\bf rank}}\left(A_{ij}(k)\right)=2$ for $k>0$. Consequently,
the null spaces of $A_{ij}(k)$'s lead to having,
\begin{equation}
\bigcap_{k=1}^{d+d^{\prime}-1}\text{{\bf null}}(A_{ij}(k))=\left\{ \boldsymbol{0}\right\} ,
\end{equation}
for all $(i,j)\in\mathcal{E}$. Therefore, $\sum_{k=1}^{d+d^{\prime}-1}A_{ij}(k)$
is positive definite for all $(i,j)\in\mathcal{E}$, i.e., the union
of $\mathcal{G}(k)$ over the time interval $\left[1,d+d^{\prime}-1\right]$
has a positive spanning tree, according to Lemma \ref{lem:1-2}, $\text{{\bf null}}\left(\sum_{k=1}^{d+d^{\prime}-1}L_{\rho(k)}\right)=\mathcal{R}$,
thus completing the proof.
\end{proof}
\begin{lem}
\label{lem:sigma}Let $L_{\rho(k)}$ be the matrix-valued Laplacian
corresponding to matrix-valued edge weights constructed in Algorithm
1, where $k\in\mathbb{Z}_{+}$. If
\begin{equation}
\frac{1}{4(n-1)\sigma}>\gamma_{ij}^{\rho(k)}>0,\label{eq:sigma-2-1}
\end{equation}
and 
\begin{equation}
\frac{1}{4(n-1)\sigma}>\zeta_{ij}^{\rho(k)}>0,\label{eq:sigma-2-2}
\end{equation}
for any $(i,j)\in\mathcal{E}$, then $\lambda_{\text{max}}\left(L_{\rho(k)}\right)<\frac{1}{\sigma}$.
\end{lem}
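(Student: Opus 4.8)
The plan is to bound the largest eigenvalue of $L_{\rho(k)}$ through its Rayleigh quotient. Since $L_{\rho(k)}$ is symmetric (indeed positive semi-definite by Lemma \ref{lem:1-1}), we have
\[
\lambda_{\text{max}}\left(L_{\rho(k)}\right)=\sup_{\boldsymbol{z}\neq\boldsymbol{0}}\frac{\boldsymbol{z}^{\top}L_{\rho(k)}\boldsymbol{z}}{\boldsymbol{z}^{\top}\boldsymbol{z}},
\]
so it suffices to show that $\boldsymbol{z}^{\top}L_{\rho(k)}\boldsymbol{z}\le\frac{1}{2\sigma}\|\boldsymbol{z}\|^{2}$ for every $\boldsymbol{z}=(\boldsymbol{z}_{1}^{\top},\ldots,\boldsymbol{z}_{n}^{\top})^{\top}$ with $\boldsymbol{z}_{i}\in\mathbb{R}^{d+d^{\prime}}$, which already yields the claimed strict bound $\lambda_{\text{max}}(L_{\rho(k)})\le\frac{1}{2\sigma}<\frac{1}{\sigma}$ with room to spare.

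The first step is to record the operator norm of each edge weight. By the construction \eqref{eq:edge-weight-matrix}, for $k>0$ each $A_{ij}(k)$ is a sum of two rank-one terms along the mutually orthogonal directions $\boldsymbol{v}_{\rho(k)}$ and $\boldsymbol{v}_{d+d^{\prime}}$ of $\mathfrak{V}$, with eigenvalues $\gamma_{ij}^{\rho(k)}$ and $\zeta_{ij}^{\rho(k)}$ and eigenvalue $0$ on the orthogonal complement, exactly as established in the proof of Lemma \ref{lem: positivity}. Consequently $A_{ij}(k)\succeq0$ and its spectral norm is $\max\{\gamma_{ij}^{\rho(k)},\zeta_{ij}^{\rho(k)}\}$, which by \eqref{eq:sigma-2-1} and \eqref{eq:sigma-2-2} is strictly less than $\frac{1}{4(n-1)\sigma}$.

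The second step is to expand the quadratic form via the standard matrix-weighted Laplacian identity
\[
\boldsymbol{z}^{\top}L_{\rho(k)}\boldsymbol{z}=\sum_{\{i,j\}\in\mathcal{E}}(\boldsymbol{z}_{i}-\boldsymbol{z}_{j})^{\top}A_{ij}(k)(\boldsymbol{z}_{i}-\boldsymbol{z}_{j}),
\]
and to bound each summand by $\|A_{ij}(k)\|\,\|\boldsymbol{z}_{i}-\boldsymbol{z}_{j}\|^{2}\le\frac{1}{4(n-1)\sigma}\cdot2\left(\|\boldsymbol{z}_{i}\|^{2}+\|\boldsymbol{z}_{j}\|^{2}\right)$, using $\|\boldsymbol{z}_{i}-\boldsymbol{z}_{j}\|^{2}\le 2(\|\boldsymbol{z}_{i}\|^{2}+\|\boldsymbol{z}_{j}\|^{2})$. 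Summing over edges, the term $\|\boldsymbol{z}_{i}\|^{2}$ is collected once for each edge incident to node $i$, so the edge sum regroups into $\sum_{i}|\mathcal{N}_{i}|\,\|\boldsymbol{z}_{i}\|^{2}$, and since $|\mathcal{N}_{i}|\le n-1$ this is at most $(n-1)\|\boldsymbol{z}\|^{2}$. Chaining these gives $\boldsymbol{z}^{\top}L_{\rho(k)}\boldsymbol{z}\le\frac{1}{2(n-1)\sigma}\sum_{i}|\mathcal{N}_{i}|\,\|\boldsymbol{z}_{i}\|^{2}\le\frac{1}{2\sigma}\|\boldsymbol{z}\|^{2}$, completing the argument.

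The computation is essentially routine; the only points requiring care are \emph{(i)} justifying that $\|A_{ij}(k)\|=\max\{\gamma_{ij}^{\rho(k)},\zeta_{ij}^{\rho(k)}\}$, which rests on the mutual orthogonality of the auxiliary vectors, and \emph{(ii)} tracking the constants, namely that the factor $2$ from $\|\boldsymbol{z}_{i}-\boldsymbol{z}_{j}\|^{2}\le2(\|\boldsymbol{z}_{i}\|^{2}+\|\boldsymbol{z}_{j}\|^{2})$ together with the maximal degree $n-1$ exactly absorbs the $4(n-1)$ in the denominator of the weight bound. An equivalent route, if preferred, is the block-matrix norm estimate $\|L_{\rho(k)}\|\le\|D\|+\|A\|$, with each of $\|D\|$ and $\|A\|$ bounded by $\max_{i}\sum_{j\in\mathcal{N}_{i}}\|A_{ij}(k)\|\le(n-1)\cdot\frac{1}{4(n-1)\sigma}=\frac{1}{4\sigma}$, again yielding $\lambda_{\text{max}}(L_{\rho(k)})\le\frac{1}{2\sigma}<\frac{1}{\sigma}$.
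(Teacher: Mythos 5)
Your proof is correct, but it takes a genuinely different route from the paper's. The paper exploits the special structure that Algorithm 1 imposes across \emph{all} edges simultaneously: since every $A_{ij}(k)$ lies in the span of the same two rank-one projections, the Laplacian factors as a Kronecker sum
\[
L_{\rho(k)}=\Gamma^{\rho(k)}\otimes\frac{\boldsymbol{v}_{\rho(k)}\boldsymbol{v}_{\rho(k)}^{\top}}{\boldsymbol{v}_{\rho(k)}^{\top}\boldsymbol{v}_{\rho(k)}}+\Pi^{\rho(k)}\otimes\frac{\boldsymbol{v}_{d+d^{\prime}}\boldsymbol{v}_{d+d^{\prime}}^{\top}}{\boldsymbol{v}_{d+d^{\prime}}^{\top}\boldsymbol{v}_{d+d^{\prime}}},
\]
where $\Gamma^{\rho(k)}$ and $\Pi^{\rho(k)}$ are \emph{scalar}-weighted Laplacians built from the $\gamma_{ij}^{\rho(k)}$ and $\zeta_{ij}^{\rho(k)}$; it then applies the triangle inequality for the spectral norm, the Kronecker eigenvalue product rule, and the bound $\|\cdot\|_{2}\le\sqrt{\|\cdot\|_{1}\|\cdot\|_{\infty}}$ to each scalar Laplacian, obtaining $\lambda_{\max}(L_{\rho(k)})<\tfrac{1}{2\sigma}+\tfrac{1}{2\sigma}=\tfrac{1}{\sigma}$. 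You instead work with the edge-sum quadratic form $\boldsymbol{z}^{\top}L\boldsymbol{z}=\sum_{\{i,j\}\in\mathcal{E}}(\boldsymbol{z}_{i}-\boldsymbol{z}_{j})^{\top}A_{ij}(\boldsymbol{z}_{i}-\boldsymbol{z}_{j})$, the elementary inequality $\|\boldsymbol{z}_{i}-\boldsymbol{z}_{j}\|^{2}\le2(\|\boldsymbol{z}_{i}\|^{2}+\|\boldsymbol{z}_{j}\|^{2})$, and the degree count $|\mathcal{N}_{i}|\le n-1$. Your argument buys two things: it is more general, since it never uses the fact that all edges share the same pair of eigen-directions --- only that each $A_{ij}(k)$ is symmetric positive semi-definite with $\|A_{ij}(k)\|<\tfrac{1}{4(n-1)\sigma}$ --- and it gives the sharper bound $\lambda_{\max}(L_{\rho(k)})\le\tfrac{1}{2\sigma}$, whereas the paper's triangle inequality loses a factor of two (indeed, since the two projections act on orthogonal subspaces, the paper's bound could also be tightened to the maximum rather than the sum, but it does not need to). What the paper's route buys is that the same Kronecker decomposition of $L_{\rho(k)}$ is structurally informative elsewhere in the analysis, whereas your estimate is purely a norm bound. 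One small caveat: your alternative remark $\|L\|\le\|D\|+\|A\|$ with $\|A\|\le\max_{i}\sum_{j}\|A_{ij}(k)\|$ silently invokes the fact that the spectral norm of a symmetric block matrix is dominated by the spectral norm of its matrix of block norms; that is true but deserves a citation or a one-line proof, unlike the main argument, which is self-contained.
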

\begin{proof}
Let
\[
\Gamma^{\rho(k)}=\left[\begin{array}{cccc}
{\displaystyle \sum_{j=1}^{n}}\gamma_{1j}^{\rho(k)}, & -\gamma_{12}^{\rho(k)}, & \ldots, & -\gamma_{1n}^{\rho(k)}\\
-\gamma_{21}^{\rho(k)}, & {\displaystyle \sum_{j=1}^{n}}\gamma_{2j}^{\rho(k)}, & \ldots, & -\gamma_{2n}^{\rho(k)}\\
\vdots & \vdots & \ddots & \vdots\\
-\gamma_{n1}^{\rho(k)}, & -\gamma_{n2}^{\rho(k)}, & \ldots & {\displaystyle \sum_{j=1}^{n}}\gamma_{nj}^{\rho(k)}
\end{array}\right],
\]
and 
\[
\Pi^{\rho(k)}=\left[\begin{array}{cccc}
{\displaystyle \sum_{j=1}^{n}}\zeta_{1j}^{\rho(k)}, & -\zeta_{12}^{\rho(k)}, & \ldots, & -\zeta_{1n}^{\rho(k)}\\
-\zeta_{21}^{\rho(k)}, & {\displaystyle \sum_{j=1}^{n}}\zeta_{2j}^{\rho(k)}, & \ldots, & -\zeta_{2n}^{\rho(k)}\\
\vdots & \vdots & \ddots & \vdots\\
-\zeta_{n1}^{\rho(k)}, & -\zeta_{n2}^{\rho(k)}, & \ldots & {\displaystyle \sum_{j=1}^{n}}\zeta_{nj}^{\rho(k)}
\end{array}\right].
\]
Note that matrix-valued Laplacian corresponding to matrix-valued edge
weights constructed in Algorithm 1 satisfy
\begin{eqnarray*}
L_{\rho(k)} & = & \Gamma^{\rho(k)}\otimes\frac{\boldsymbol{v}_{\rho(k)}\boldsymbol{v}_{\rho(k)}^{\top}}{\boldsymbol{v}_{\rho(k)}^{\top}\boldsymbol{v}_{\rho(k)}}\\
 & + & \Pi^{\rho(k)}\otimes\frac{\boldsymbol{v}_{d+d^{\prime}}\boldsymbol{v}_{d+d^{\prime}}^{\top}}{\boldsymbol{v}_{d+d^{\prime}}^{\top}\boldsymbol{v}_{d+d^{\prime}}};
\end{eqnarray*}
therefore, 

\begin{eqnarray*}
 &  & \lambda_{\text{max}}\left(L_{\rho(k)}\right)\\
 & = & \parallel L_{\rho(k)}\parallel_{2}\\
 & \leq & \left\Vert \Gamma^{\rho(k)}\otimes\frac{\boldsymbol{v}_{\rho(k)}\boldsymbol{v}_{\rho(k)}^{\top}}{\boldsymbol{v}_{\rho(k)}^{\top}\boldsymbol{v}_{\rho(k)}}\right\Vert _{2}\\
 & + & \left\Vert \Pi^{\rho(k)}\otimes\frac{\boldsymbol{v}_{d+d^{\prime}}\boldsymbol{v}_{d+d^{\prime}}^{\top}}{\boldsymbol{v}_{d+d^{\prime}}^{\top}\boldsymbol{v}_{d+d^{\prime}}}\right\Vert _{2}\\
 & = & \lambda_{\text{max}}\left(\Gamma^{\rho(k)}\otimes\frac{\boldsymbol{v}_{\rho(k)}\boldsymbol{v}_{\rho(k)}^{\top}}{\boldsymbol{v}_{\rho(k)}^{\top}\boldsymbol{v}_{\rho(k)}}\right)\\
 & + & \lambda_{\text{max}}\left(\Pi^{\rho(k)}\otimes\frac{\boldsymbol{v}_{d+d^{\prime}}\boldsymbol{v}_{d+d^{\prime}}^{\top}}{\boldsymbol{v}_{d+d^{\prime}}^{\top}\boldsymbol{v}_{d+d^{\prime}}}\right)\\
 & = & \lambda_{\text{max}}\left(\Gamma^{\rho(k)}\right)\lambda_{\text{max}}\left(\frac{\boldsymbol{v}_{\rho(k)}\boldsymbol{v}_{\rho(k)}^{\top}}{\boldsymbol{v}_{\rho(k)}^{\top}\boldsymbol{v}_{\rho(k)}}\right)\\
 & + & \lambda_{\text{max}}\left(\Pi^{\rho(k)}\right)\lambda_{\text{max}}\left(\frac{\boldsymbol{v}_{d+d^{\prime}}\boldsymbol{v}_{d+d^{\prime}}^{\top}}{\boldsymbol{v}_{d+d^{\prime}}^{\top}\boldsymbol{v}_{d+d^{\prime}}}\right).
\end{eqnarray*}
Note that,
\begin{align*}
\lambda_{\text{max}}\left(\frac{\boldsymbol{v}_{\rho(k)}\boldsymbol{v}_{\rho(k)}^{\top}}{\boldsymbol{v}_{\rho(k)}^{\top}\boldsymbol{v}_{\rho(k)}}\right) & =\lambda_{\text{max}}\left(\frac{\boldsymbol{v}_{d+d^{\prime}}\boldsymbol{v}_{d+d^{\prime}}^{\top}}{\boldsymbol{v}_{d+d^{\prime}}^{\top}\boldsymbol{v}_{d+d^{\prime}}}\right)\\
 & =1.
\end{align*}
Then one has,
\[
\lambda_{\text{max}}\left(L_{\rho(k)}\right)\leq\lambda_{\text{max}}\left(\Gamma^{\rho(k)}\right)+\lambda_{\text{max}}\left(\Pi^{\rho(k)}\right).
\]
According to \eqref{eq:sigma-2-1} and \eqref{eq:sigma-2-2}, one
has,
\begin{align*}
\left\Vert \Gamma^{\rho(k)}\right\Vert _{1} & =\left\Vert \Gamma^{\rho(k)}\right\Vert _{\infty}\\
 & =\max_{1<j<n}\left(2{\displaystyle \sum_{i=1}^{n}}\gamma_{ij}^{\rho(k)}\right)<\frac{1}{2\sigma},
\end{align*}
and

\begin{align*}
\left\Vert \Pi^{\rho(k)}\right\Vert _{1} & =\left\Vert \Pi^{\rho(k)}\right\Vert _{\infty}\\
 & =\max_{1<j<n}\left(2{\displaystyle \sum_{i=1}^{n}}\zeta_{ij}^{\rho(k)}\right)<\frac{1}{2\sigma}.
\end{align*}
Using the facts that
\begin{align*}
\lambda_{\text{max}}\left(\Gamma^{\rho(k)}\right) & =\left\Vert \Gamma^{\rho(k)}\right\Vert _{2}\\
 & \leq\sqrt{\left\Vert \Gamma^{\rho(k)}\right\Vert _{1}\left\Vert \Gamma^{\rho(k)}\right\Vert _{\infty}},
\end{align*}
and
\begin{align*}
\lambda_{\text{max}}\left(\Pi^{\rho(k)}\right) & =\left\Vert \Pi^{\rho(k)}\right\Vert _{2}\\
 & \leq\sqrt{\left\Vert \Pi^{\rho(k)}\right\Vert _{1}\left\Vert \Pi^{\rho(k)}\right\Vert _{\infty}}
\end{align*}
yields $\lambda_{\text{max}}\left(\Gamma^{\rho(k)}\right)<\frac{1}{2\sigma}$
and $\lambda_{\text{max}}\left(\Pi^{\rho(k)}\right)<\frac{1}{2\sigma}$.
Therefore, 
\[
\lambda_{\text{max}}\left(L_{\rho(k)}\right)<\frac{1}{\sigma}.
\]
\end{proof}
Note that $\rho(k)$ in Lemma \ref{lem:sigma} is eventually a periodic
function of time index $k$. Therefore, $L_{\rho(k)}$ in Lemma \ref{lem:sigma}
is related to the matrix-value Laplacian at time $k$ and can also
be referred to as $L(k)$.

We are now ready to prove Theorem \ref{thm:consensus theorem}.
\begin{proof}
Let 
\[
\boldsymbol{\omega}(k)=\boldsymbol{x}(k)-\boldsymbol{x}_{f},
\]
where $\boldsymbol{x}_{f}=\mathds{1}_{n}\otimes\left(\text{Avg}(\boldsymbol{x}(0))\right)$.
Then we have 
\[
\boldsymbol{\omega}(k+1)=\left(I-\sigma L(k)\right)\boldsymbol{\omega}(k).
\]
Denote 
\[
\varPhi(d+d^{\prime},1)=\left(I-\sigma L(d+d^{\prime}-1)\right)\cdots\left(I-\sigma L(1)\right),
\]
If $\boldsymbol{\omega}(1)\in\mathcal{R}$, then $\boldsymbol{x}(k)=\boldsymbol{x}_{f}$
for any $k\geq1$; else if $\boldsymbol{\omega}(1)\notin\mathcal{R}$,
observe that,
\begin{align*}
 & \mu_{d+d^{\prime}+1}(\varPhi(d+d^{\prime},1)^{\top}\varPhi(d+d^{\prime},1))\\
 & \geq\frac{\boldsymbol{\omega}(1)^{\top}(\varPhi(d+d^{\prime},1)^{\top}\varPhi(d+d^{\prime},1))\boldsymbol{\omega}(1)}{\boldsymbol{\omega}(1)^{\top}\boldsymbol{\omega}(1)}\\
 & =\frac{\boldsymbol{\omega}(d+d^{\prime})^{\top}\boldsymbol{\omega}(d+d^{\prime})}{\boldsymbol{\omega}(1)^{\top}\boldsymbol{\omega}(1)},
\end{align*}
which implies,
\begin{align*}
 & \boldsymbol{\omega}(d+d^{\prime})^{\top}\boldsymbol{\omega}(d+d^{\prime})\\
 & \leq\mu_{d+d^{\prime}+1}(\varPhi(d+d^{\prime},1)^{\top}\varPhi(d+d^{\prime},1))\boldsymbol{\omega}(1)^{\top}\boldsymbol{\omega}(1).
\end{align*}
Therefore,
\begin{align*}
 & \parallel\boldsymbol{\omega}(d+d^{\prime})\parallel\\
 & \leq\mu_{d+d^{\prime}+1}(\varPhi(d+d^{\prime},1)^{\top}\varPhi(d+d^{\prime},1))^{\frac{1}{2}}\parallel\boldsymbol{\omega}(1)\parallel,
\end{align*}
if there exists $k_{0}\in\mathbb{N}$ such that $\boldsymbol{\omega}({\color{red}{\color{black}(k_{0}+1)}}(d+d^{\prime}))\in\mathcal{R}$,
then $\boldsymbol{x}(k)=\boldsymbol{x}_{f}$ for any $k\geq{\color{red}{\color{black}(k_{0}+1)}}(d+d^{\prime})$.
Otherwise, for any $p\in\mathbb{Z}_{+}$, one has,
\begin{align*}
 & \parallel\boldsymbol{\omega}(p(d+d^{\prime}))\parallel\\
 & \leq\mu_{d+d^{\prime}+1}(\varPhi(d+d^{\prime},1)^{\top}\varPhi(d+d^{\prime},1))^{\frac{1}{2}p}\parallel\boldsymbol{\omega}(1)\parallel.
\end{align*}
Moreover,
\begin{align*}
 & \parallel\boldsymbol{\omega}(k+1)\parallel-\parallel\boldsymbol{\omega}(k)\parallel\\
= & \boldsymbol{\omega}(k)^{\top}\left(I-\sigma L(k)\right)\left(I-\sigma L(k)\right)\boldsymbol{\omega}(k)-\boldsymbol{\omega}(k)^{\top}\boldsymbol{\omega}(k)\\
= & -\boldsymbol{\omega}(k)^{\top}\left(2\sigma L(k)-\sigma^{2}L^{2}(k)\right)\boldsymbol{\omega}(k),
\end{align*}
and 
\[
\frac{\boldsymbol{\omega}(k)^{\top}L^{2}(k)\boldsymbol{\omega}(k)}{\boldsymbol{\omega}(k)^{\top}L(k)\boldsymbol{\omega}(k)}\leq\lambda_{\text{max}}(L(k)).
\]
From Lemma \ref{lem:sigma}, $\lambda_{\text{max}}(L(k))<\frac{1}{\sigma}$,
thus, 
\[
\sigma^{2}\boldsymbol{\omega}(k)^{\top}L^{2}(k)\boldsymbol{\omega}(k)<\sigma\boldsymbol{\omega}(k)^{\top}L(k)\boldsymbol{\omega}(k),
\]
and
\[
\parallel\boldsymbol{\omega}(k+1)\parallel-\parallel\boldsymbol{\omega}(k)\parallel<0.
\]
Hence,
\begin{align*}
 & \parallel\boldsymbol{\omega}(k)\parallel\\
\leq & \parallel\boldsymbol{\omega}(p(d+d^{\prime}))\parallel\\
\leq & \mu_{d+d^{\prime}+1}(\varPhi(d+d^{\prime},1)^{\top}\varPhi(d+d^{\prime},1))^{\frac{1}{2}p}\parallel\boldsymbol{\omega}(1)\parallel,
\end{align*}
for $k\in[p(d+d^{\prime}),(p+1)(d+d^{\prime}))$ and any $p\in\mathbb{Z}_{+}$.

Thereby, from Lemma \ref{lem: positivity} and Theorem \ref{thm:eigenvalue inequality},
one has 
\begin{equation}
\mu_{d+d^{\prime}+1}(\varPhi(d+d^{\prime},1)^{\top}\varPhi(d+d^{\prime},1))<1.
\end{equation}
Then $\text{{\bf lim}}_{k\rightarrow\infty}\parallel\boldsymbol{\omega}(k)\parallel=0$,
implying that the multi-agent system \eqref{equ:matrix-consensus-overall}
using Algorithm \ref{MWN-PPAC algorithm} admits average consensus.
\end{proof}
Then, we have shown that the asymptotic average consensus of multi-agent
system \eqref{equ:matrix-consensus-overall} can be guaranteed using
the profile of matrix-valued edge weight generated from Algorithm
\ref{MWN-PPAC algorithm}. We shall proceed to examine the privacy-preservation
performance of the Algorithm \ref{MWN-PPAC algorithm} in the subsequent
discussions.

\section{Privacy-preserving Analysis\label{sec:Privacy}}

In this section, we shall discuss the privacy-preservation performance
of Algorithm \ref{MWN-PPAC algorithm}. We first provide the main
result.
\begin{thm}
\label{thm:Privacy-preserving}By implementing Algorithm \ref{MWN-PPAC algorithm},
for an agent $b\in\mathcal{V}_{\mathcal{B}}$, the privacy of $\boldsymbol{x}_{b}(0)$
is preserved if $b$ has at least one legitimate neighbor $m\in\mathcal{V}_{\mathcal{B}}$.
\end{thm}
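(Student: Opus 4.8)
The plan is to verify Definition \ref{def:definition of privacy} directly by a simulation argument: given the true input $\{\boldsymbol{x}_i(0),\mathcal{I}_i\}_{i\in\mathcal{V}}$ and an arbitrary target $\bar{\boldsymbol{x}}_b(0)\in\mathbb{R}^{d}$, I would exhibit an alternative \emph{admissible} execution of Algorithm \ref{MWN-PPAC algorithm} producing exactly the same observation set $\mathcal{O}_{\mathcal{V}_{\mathcal{A}}}^{\mathcal{M}}$. Writing the perturbation field $\boldsymbol{\delta}(k)=\bar{\boldsymbol{x}}(k)-\boldsymbol{x}(k)$, the idea is to confine $\boldsymbol{\delta}(0)$ to the legitimate agent $b$ and its legitimate neighbor $m$, prescribing the real block of $\boldsymbol{\delta}_b(0)$ to be $\Delta:=\bar{\boldsymbol{x}}_b(0)-\boldsymbol{x}_b(0)$ and setting $\boldsymbol{\delta}_m(0)=-\boldsymbol{\delta}_b(0)$, with $\boldsymbol{\delta}_i(0)=\boldsymbol{0}$ otherwise. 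Because the weights are symmetric, $(\mathds{1}_n^{\top}\otimes I_{d+d^{\prime}})L(k)=\boldsymbol{0}$ and the total $\sum_i\boldsymbol{\delta}_i(k)$ is conserved by \eqref{equ:matrix-consensus-overall}; since $\sum_i\boldsymbol{\delta}_i(0)=\boldsymbol{0}$, the lifted consensus value is unchanged, so every adversary's own state trajectory (which converges to $\mathds{1}_n\otimes\mathrm{Avg}(\boldsymbol{x}(0))$) is preserved. The adversaries' inputs and states are left untouched.

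Next I would make the perturbation invisible to $\mathcal{V}_{\mathcal{A}}$ at the first step. The only adversary-visible quantities involving $b$ or $m$ at $k=0$ are the messages $\boldsymbol{y}_{b\to a}(0)=A_{ab}(0)\boldsymbol{x}_b(0)$ across \emph{adversary-incident} edges, whose weights are the fixed rank-two matrices \eqref{eq:edge-weight-matrix-1} spanned by $\boldsymbol{v}_1$ and $\boldsymbol{v}_{d+d^{\prime}}$. These are unchanged precisely when $\boldsymbol{v}_1^{\top}\boldsymbol{\delta}_b(0)=0$ and $\boldsymbol{v}_{d+d^{\prime}}^{\top}\boldsymbol{\delta}_b(0)=0$ (the same two conditions then hold for $\boldsymbol{\delta}_m(0)=-\boldsymbol{\delta}_b(0)$). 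Here the agent-state lifting is essential: by property 3) the support of $\boldsymbol{v}_1$ lies in the virtual block, so the first constraint touches only the $d^{\prime}$ free virtual entries of $\boldsymbol{\delta}_b(0)$, while by property 4) the all-nonzero vector $\boldsymbol{v}_{d+d^{\prime}}$ contributes a single affine constraint on those same entries. Since $d^{\prime}\ge 3>2$ and the virtual restrictions of $\boldsymbol{v}_1$ and $\boldsymbol{v}_{d+d^{\prime}}$ are linearly independent, the two equations are solvable for \emph{any} prescribed real block $\Delta$; this is exactly the mechanism of ``sacrificing the correlation of the virtual states'' announced before Example \ref{exa:why-switching-network}.

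I would then use the legitimate neighbor to erase the perturbation after a single update, so that $\bar{\boldsymbol{x}}(k)=\boldsymbol{x}(k)$ for all $k\ge 1$ and the executions coincide from step one onward. The key is that every edge incident to $b$ or $m$ whose other endpoint is legitimate (in particular $(b,m)$) carries a \emph{freely chosen} weight at $k=0$ that $\mathcal{V}_{\mathcal{A}}$ never observes, since $\mathcal{S}^{i}$ in \eqref{eq:Si} exposes only edges incident to $i$. For each legitimate neighbor $j\notin\{b,m\}$ of $b$ (resp.\ of $m$) I would redefine $\bar A_{bj}(0)$ (resp.\ $\bar A_{mj}(0)$) so that the coupling $j$ receives, namely $A_{jb}(0)(\boldsymbol{x}_b(0)-\boldsymbol{x}_j(0))$, is left invariant; this is one linear equation in the free matrix and keeps $\boldsymbol{\delta}_j(1)=\boldsymbol{0}$, preventing leakage into the rest of the network, while the adversary-incident terms in the updates of $b$ and $m$ already vanish by the orthogonality imposed above. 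The residual discrepancy in the update of $b$ is then a single known vector, and since $\boldsymbol{w}:=\bar{\boldsymbol{x}}_m(0)-\bar{\boldsymbol{x}}_b(0)\neq\boldsymbol{0}$ can be arranged by the virtual choice, I can solve $\bar A_{bm}(0)\boldsymbol{w}=\boldsymbol{p}^{\ast}$ (e.g.\ by a rank-one matrix) to force $\boldsymbol{\delta}_b(1)=\boldsymbol{0}$; conservation of $\sum_i\boldsymbol{\delta}_i$ together with $\boldsymbol{\delta}_j(1)=\boldsymbol{0}$ for $j\neq b,m$ then forces $\boldsymbol{\delta}_m(1)=\boldsymbol{0}$ as well, so no over-determination arises. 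Every redefined weight is admissible because Algorithm \ref{MWN-PPAC algorithm} permits arbitrary $A_{ij}(0)\in\mathbb{R}^{d\times d}$ on legitimate-to-legitimate edges.

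Collecting the pieces, the adversaries' states and received messages coincide at $k=0$ (by the orthogonality) and at every $k\ge 1$ (since the full states coincide), while each re-designed weight sits on an edge the adversaries cannot monitor; hence the two observation sets in \eqref{eq:privacy-definition} are equal and $\bar{\boldsymbol{x}}_b(0)$ is indistinguishable from $\boldsymbol{x}_b(0)$. The main obstacle, and the precise point where the hypothesis enters, is the third step: the compensating perturbation $\boldsymbol{\delta}_m=-\boldsymbol{\delta}_b$ that preserves the consensus value must be absorbed through a channel the adversaries cannot see and whose weight can be tuned. A legitimate neighbor supplies exactly such a channel, the unobserved edge $(b,m)$ with its freely assignable $k=0$ weight; were every neighbor of $b$ honest-but-curious, all of $b$'s $k=0$ weights would be the fixed, adversary-visible rank-two matrices of \eqref{eq:edge-weight-matrix-1}, leaving no hidden, tunable coupling to realize the erasure, and the argument would fail. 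Establishing the simultaneous solvability of the orthogonality constraints and of the matrix equation for $\bar A_{bm}(0)$ is the technical heart of the proof.
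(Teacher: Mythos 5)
Your proposal is correct and follows essentially the same route as the paper's proof: the same indistinguishability argument with the perturbation confined to $b$ and its legitimate neighbor $m$ (i.e., $\bar{\boldsymbol{x}}_m(0)=\boldsymbol{x}_m(0)+\boldsymbol{x}_b(0)-\bar{\boldsymbol{x}}_b(0)$), the same orthogonality conditions $\boldsymbol{v}_1^{\top}\boldsymbol{\delta}_b(0)=\boldsymbol{v}_{d+d^{\prime}}^{\top}\boldsymbol{\delta}_b(0)=0$ carried by the virtual block, and the same redesign of the adversary-invisible $k=0$ weights so that the two executions coincide from $k=1$ onward. The only difference is in how $\bar{A}_{bm}(0)$ is pinned down: you solve the single equation $\bar{A}_{bm}(0)\boldsymbol{w}=\boldsymbol{p}^{*}$ and invoke sum conservation to get $\boldsymbol{\delta}_m(1)=\boldsymbol{0}$ (which requires choosing a \emph{symmetric} solution, not the generic rank-one one you parenthetically suggest), whereas the paper imposes the two conditions \eqref{eq:weight-matrix-between-friends-1}--\eqref{eq:weight-matrix-between-friends-2}, with the $\tfrac{1}{2\sigma}$ scaling playing the role of your $\boldsymbol{p}^{*}$, and then verifies $\bar{\boldsymbol{x}}_q(1)=\boldsymbol{x}_q(1)$ agent by agent.
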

\begin{proof}
We shall prove the result via the indistinguishability of a private
value\textquoteright s arbitrary variation to the honest-but-curious
agents. Without loss of generality, assume that agent $b$ has only
one legitimate neighbor $m\in\mathcal{V}_{\mathcal{B}}$. Here, there
are two cases, namely, $(a,m)\in\mathcal{E}$ (Case 1) and $(a,m)\not\in\mathcal{E}$
(Case 2) as shown in Figure \ref{fig:alex-eve-bob}.
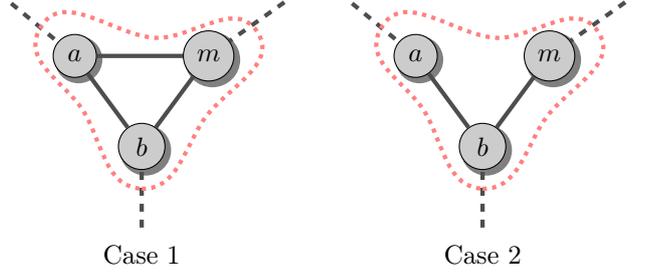
\begin{figure}[H]
\begin{centering}
\begin{tikzpicture}[scale=0.8,use Hobby shortcut]
	\node (ni) at (-1.1,1.5) [circle,circular drop shadow,fill=black!20,draw] {$a$};
    \node (nj) at (0,0) [circle,circular drop shadow,fill=black!20,draw] {$b$};
    \node (nm) at (1.1,1.5) [circle,circular drop shadow,fill=black!20,draw] {$m$};

	\node (niv) at (-2.3,2.5) {};
	\node (njv) at (0,-1.5) {};
	\node (nmv) at (2.5,2.5) {};

	\node (case1) at (0,-1.8) {Case 1};

	\draw[-, ultra thick, color=black!70] (ni) -- (nj); 
	\draw[-, ultra thick, color=black!70] (nj) -- (nm); 
    \draw[-, ultra thick, color=black!70] (ni) -- (nm); 

	\draw[-, dashed, ultra thick, color=black!70] (niv) -- (ni); 
	\draw[-, dashed, ultra thick, color=black!70] (njv) -- (nj); 
    \draw[-, dashed, ultra thick, color=black!70] (nmv) -- (nm);

 \path
  (0,-0.7) coordinate (zj)
  (1.1,0.55) coordinate (zjm)
  (1.9,1.9) coordinate (zm)
  (0.3,1.8) coordinate (zmi)
  (-1.7,1.9) coordinate (zi)
  (-1,0.65) coordinate (zij);
  \draw[color=red!50, ultra thick, dotted, closed] (zj) .. (zjm) .. (zm) .. (zmi) .. (zi) .. (zij);

\end{tikzpicture}\,\,\,\,\,\,\,\begin{tikzpicture}[scale=0.8,use Hobby shortcut]
	\node (ni) at (-1.1,1.5) [circle,circular drop shadow,fill=black!20,draw] {$a$};
    \node (nj) at (0,0) [circle,circular drop shadow,fill=black!20,draw] {$b$};
    \node (nm) at (1.1,1.5) [circle,circular drop shadow,fill=black!20,draw] {$m$};

	\node (niv) at (-2.3,2.5) {};
	\node (njv) at (0,-1.5) {};
	\node (nmv) at (2.5,2.5) {};

	\node (case2) at (0,-1.8) {Case 2};

	\draw[-, ultra thick, color=black!70] (ni) -- (nj); 
	\draw[-, ultra thick, color=black!70] (nj) -- (nm); 

	\draw[-, dashed, ultra thick, color=black!70] (niv) -- (ni); 
	\draw[-, dashed, ultra thick, color=black!70] (njv) -- (nj); 
    \draw[-, dashed, ultra thick, color=black!70] (nmv) -- (nm);

 \path
  (0,-0.7) coordinate (zj)
  (1.1,0.55) coordinate (zjm)
  (1.9,1.9) coordinate (zm)
  (0.3,1.8) coordinate (zmi)
  (-1.7,1.9) coordinate (zi)
  (-1,0.65) coordinate (zij);
  \draw[color=red!50, ultra thick, dotted, closed] (zj) .. (zjm) .. (zm) .. (zmi) .. (zi) .. (zij);

\end{tikzpicture}
\par\end{centering}
\caption{The local communication structures amongst a legitimate agent $b$
and its honest-but-curious neighbor $a$ and legitimate neighbor $m$.}
\label{fig:alex-eve-bob}
\end{figure}

Case 1: We need to show that the privacy of the initial value $\boldsymbol{x}_{b}^{[d^{\prime}+1:d+d^{\prime}]}(0)$
can be preserved against any honest-but-curious agent $a\in\mathcal{V_{\mathcal{A}}}$,
namely, agent $a$ cannot infer the exact value of $\boldsymbol{x}_{b}^{[d^{\prime}+1:d+d^{\prime}]}(0)$
via $\mathcal{O}_{a}^{\mathcal{M}}\left(\{\boldsymbol{x}_{i}(0),\mathcal{I}_{i}\}_{i\in\mathcal{V}}\right)$.
According to Definition \ref{def:definition of privacy}, it suffices
to show that for arbitrary initial value of real state $\bar{\boldsymbol{x}}_{b}^{[d^{\prime}+1:d+d^{\prime}]}(0)\neq\boldsymbol{x}_{b}^{[d^{\prime}+1:d+d^{\prime}]}(0),$
there exist initial values of agent $m$ and associated matrix-valued
edge weights such that the equality \eqref{eq:privacy-definition}
holds, and the agents' states still converge to the original average
value $\text{Avg}(\boldsymbol{x}(0))$ even if $\boldsymbol{x}_{b}(0)$
is changed into $\bar{\boldsymbol{x}}_{b}(0)$. We shall present the
selection of initial value of $\boldsymbol{x}_{m}(0)$ and associated
matrix-valued coupling weights, respectively.

We first examine the selection of initial value of $\boldsymbol{x}_{m}(0)$.
First, we choose $\bar{\boldsymbol{x}}_{m}(0)$ as, 
\begin{equation}
\bar{\boldsymbol{x}}_{m}(0)=\boldsymbol{x}_{m}(0)+\boldsymbol{x}_{b}(0)-\bar{\boldsymbol{x}}_{b}(0).\label{eq:aug-state-condition-0}
\end{equation}
Second, the virtual state of $\bar{\boldsymbol{x}}_{b}(0)$ satisfies
\begin{equation}
\left(\boldsymbol{v}_{1}^{[1:d^{\prime}]}\right)^{\top}\left(\bar{\boldsymbol{x}}_{b}^{[1:d^{\prime}]}(0)-\boldsymbol{x}_{b}^{[1:d^{\prime}]}(0)\right)=0,\label{eq:aug-state-condition-1}
\end{equation}
and
\begin{eqnarray}
 &  & \left(\boldsymbol{v}_{d+d^{\prime}}^{[1:d^{\prime}]}\right)^{\top}\left(\bar{\boldsymbol{x}}_{b}^{[1:d^{\prime}]}(0)-\boldsymbol{x}_{b}^{[1:d^{\prime}]}(0)\right)\nonumber \\
 & = & -\left(\boldsymbol{v}_{d+d^{\prime}}^{[d^{\prime}+1:d^{\prime}+d]}\right)^{\top}(\bar{\boldsymbol{x}}_{b}^{[d^{\prime}+1:d^{\prime}+d]}(0)\nonumber \\
 &  & -\boldsymbol{x}_{b}^{[d^{\prime}+1:d^{\prime}+d]}(0)),\label{eq:aug-state-condition-2}
\end{eqnarray}
and $\boldsymbol{x}_{b}^{[1:d^{\prime}]}(0)-\bar{\boldsymbol{x}}_{b}^{[1:d^{\prime}]}(0)$
and $\boldsymbol{x}_{m}^{[1:d^{\prime}]}(0)-\boldsymbol{x}_{b}^{[1:d^{\prime}]}(0)$
are linearly independent. According to 2) and 3) of Step 2 in $\mathsection$\ref{subsec:Dynamic-Matrix-valued-Edge},
we notice that the initial value subject to the aforementioned constraints
\eqref{eq:aug-state-condition-0}, \eqref{eq:aug-state-condition-1}
and \eqref{eq:aug-state-condition-2} are feasible. 

Finally, we choose
\begin{equation}
\bar{\boldsymbol{x}}_{p}(0)=\boldsymbol{x}_{p}(0),\label{eq:state-condition-for-other-agents}
\end{equation}
for all agent $p\in\mathcal{V}\setminus\left\{ b,m\right\} $. 

Subsequently, we proceed to examine the selection of associated matrix-valued
coupling weights as follows. For $k=0$, we first choose $\bar{A}_{bm}(0)$
such that
\begin{equation}
\bar{A}_{bm}(0)\left(\boldsymbol{x}_{m}(0)-\boldsymbol{x}_{b}(0)\right)=A_{bm}(0)\left(\boldsymbol{x}_{m}(0)-\boldsymbol{x}_{b}(0)\right),\label{eq:weight-matrix-between-friends-1}
\end{equation}
and
\begin{equation}
\bar{A}_{bm}(0)\left(\boldsymbol{x}_{b}(0)-\bar{\boldsymbol{x}}_{b}(0)\right)=\frac{1}{2\sigma}\left(\boldsymbol{x}_{b}(0)-\bar{\boldsymbol{x}}_{b}(0)\right).\label{eq:weight-matrix-between-friends-2}
\end{equation}
One can see that there are $(d+d^{\prime})^{2}$ free variabes in
matrix $\bar{A}_{bm}(0)$ with $2(d+d^{\prime})$ equations in the
equalities \eqref{eq:weight-matrix-between-friends-1} and \eqref{eq:weight-matrix-between-friends-2},
also we note that $\boldsymbol{x}_{b}^{[1:d^{\prime}]}(0)-\bar{\boldsymbol{x}}_{b}^{[1:d^{\prime}]}(0)$
and $\boldsymbol{x}_{m}^{[1:d^{\prime}]}(0)-\boldsymbol{x}_{b}^{[1:d^{\prime}]}(0)$
are linearly independent. Therefore, there exists a matrix $\bar{A}_{bm}(0)\in\mathbb{R}^{n\times n}$
satisfying \eqref{eq:weight-matrix-between-friends-1} and \eqref{eq:weight-matrix-between-friends-2}
simultaneously. 

Second, let $\mathcal{E}^{*}$ denote edges between agents $m$ and
$p$ where $p\in\mathcal{V}_{\mathcal{B}}\setminus\left\{ b,m\right\} $.
We choose $\bar{A}_{pm}(0)$ such that,
\begin{equation}
\bar{A}_{pm}(0)\left(\bar{\boldsymbol{x}}_{m}(0)-\boldsymbol{x}_{p}(0)\right)=A_{pm}(0)\left(\boldsymbol{x}_{m}(0)-\boldsymbol{x}_{p}(0)\right),\label{eq:weight-matrix-between-friends-3}
\end{equation}
for all edges in $\mathcal{E}^{*}$. One can see that there are $(d+d^{\prime})^{2}$
free variabes in matrix $\bar{A}_{pm}(0)$ with $d+d^{\prime}$ equations
in the equalities \eqref{eq:weight-matrix-between-friends-3}, therefore,
there exist a matrix $\bar{A}_{pm}(0)$ satisfying \eqref{eq:weight-matrix-between-friends-3}.

Third, we choose 
\begin{equation}
\bar{A}_{pq}(0)=A_{pq}(0),\label{eq:weight-matrix-between-othet-agents}
\end{equation}
for all $(p,q)\in\mathcal{E}\setminus\left(\left\{ (b,m)\right\} \cup\mathcal{E}^{*}\right)$.

Finally, for $k\ge1$, we choose $\bar{A}_{pq}(k)=A_{pq}(k)$ for
all $(p,q)\in\mathcal{E}$.

From \eqref{eq:aug-state-condition-0} to \eqref{eq:aug-state-condition-2}
and \eqref{eq:weight-matrix-between-othet-agents}, one can conclude
that,
\begin{equation}
\bar{A}_{ab}(0)\bar{\boldsymbol{x}}_{b}(0)=A_{ab}(0)\boldsymbol{x}_{b}(0)\label{eq:adversary-friend-1}
\end{equation}
 and 
\begin{equation}
\bar{A}_{am}(0)\bar{\boldsymbol{x}}_{m}(0)=A_{am}(0)\boldsymbol{x}_{m}(0),\label{eq:adversary-friend-2}
\end{equation}
for all $a\in\mathcal{V}_{\mathcal{A}}$.

To sum up the above analysis, one can verify that $\bar{\boldsymbol{x}}_{q}(1)=\boldsymbol{x}_{q}(1)$
for all $q\in\mathcal{V}$. Specifically, according to \eqref{eq:state-condition-for-other-agents},
\eqref{eq:weight-matrix-between-othet-agents}, \eqref{eq:adversary-friend-1}
and \eqref{eq:adversary-friend-2}, one has, 
\begin{eqnarray*}
 &  & \bar{\boldsymbol{x}}_{a}(1)\\
 & = & \bar{\boldsymbol{x}}_{a}(0)+\sigma\bar{A}_{ab}(0)\bar{\boldsymbol{x}}_{b}(0)-\sigma\bar{A}_{ab}(0)\bar{\boldsymbol{x}}_{a}(0)\\
 & + & \sigma\bar{A}_{am}(0)\bar{\boldsymbol{x}}_{m}(0)-\sigma\bar{A}_{am}(0)\bar{\boldsymbol{x}}_{a}(0)\\
 & + & \sigma{\displaystyle \sum_{p\in\mathcal{N}_{a}\setminus\{b,m\}}}\left(\bar{A}_{ap}(0)\bar{\boldsymbol{x}}_{p}(0)-\bar{A}_{ap}(0)\bar{\boldsymbol{x}}_{a}(0)\right)\\
 & = & \boldsymbol{x}_{a}(0)+\sigma A_{ab}(0)\boldsymbol{x}_{b}(0)-\sigma A_{ab}(0)\boldsymbol{x}_{a}(0)\\
 & + & \sigma A_{am}(0)\boldsymbol{x}_{m}(0)-\sigma A_{am}(0)\boldsymbol{x}_{a}(0)\\
 & + & \sigma{\displaystyle \sum_{p\in\mathcal{N}_{a}\setminus\{b,m\}}}\left(A_{ap}(0)\boldsymbol{x}_{p}(0)-A_{ap}(0)\boldsymbol{x}_{a}(0)\right)\\
 & = & \boldsymbol{x}_{a}(1),
\end{eqnarray*}
for all $a\in\mathcal{V}_{\mathcal{A}}$. 

According to \eqref{eq:state-condition-for-other-agents}, \eqref{eq:weight-matrix-between-friends-1},
\eqref{eq:weight-matrix-between-friends-2} and \eqref{eq:adversary-friend-1},
one has,

\begin{eqnarray*}
 &  & \bar{\boldsymbol{x}}_{b}(1)\\
 & = & \bar{\boldsymbol{x}}_{b}(0)+\sigma{\displaystyle \sum_{p\in\mathcal{V}\setminus\left\{ b,m\right\} }}\left(\bar{A}_{bp}(0)\bar{\boldsymbol{x}}_{p}(0)-\bar{A}_{bp}(0)\bar{\boldsymbol{x}}_{b}(0)\right)\\
 & + & \sigma\bar{A}_{bm}(0)\bar{\boldsymbol{x}}_{m}(0)-\sigma\bar{A}_{bm}(0)\bar{\boldsymbol{x}}_{b}(0)\\
 & = & \bar{\boldsymbol{x}}_{b}(0)+\sigma{\displaystyle \sum_{p\in\mathcal{V}\setminus\left\{ b,m\right\} }}\left(A_{bp}(0)\boldsymbol{x}_{p}(0)-A_{bp}(0)\boldsymbol{x}_{b}(0)\right)\\
 & + & \sigma\bar{A}_{bm}(0)\left(\boldsymbol{x}_{m}(0)+\boldsymbol{x}_{b}(0)-\bar{\boldsymbol{x}}_{b}(0)-\bar{\boldsymbol{x}}_{b}(0)\right)\\
 & = & \bar{\boldsymbol{x}}_{b}(0)+\sigma{\displaystyle \sum_{p\in\mathcal{V}\setminus\left\{ b,m\right\} }}\left(A_{bp}(0)\boldsymbol{x}_{p}(0)-A_{bp}(0)\boldsymbol{x}_{b}(0)\right)\\
 & + & \sigma\bar{A}_{bm}(0)\left(\boldsymbol{x}_{m}(0)-\boldsymbol{x}_{b}(0)+2\boldsymbol{x}_{b}(0)-2\bar{\boldsymbol{x}}_{b}(0)\right)\\
 & = & \bar{\boldsymbol{x}}_{b}(0)+\sigma{\displaystyle \sum_{p\in\mathcal{V}\setminus\left\{ b,m\right\} }}\left(A_{bp}(0)\boldsymbol{x}_{p}(0)-A_{bp}(0)\boldsymbol{x}_{b}(0)\right)\\
 & + & \sigma\bar{A}_{bm}(0)\left(\boldsymbol{x}_{m}(0)-\boldsymbol{x}_{b}(0)\right)\\
 & + & \sigma\bar{A}_{bm}(0)\left(2\boldsymbol{x}_{b}(0)-2\bar{\boldsymbol{x}}_{b}(0)\right)\\
 & = & \boldsymbol{x}_{b}(0)+\sigma A_{bm}(0)\left(\boldsymbol{x}_{m}(0)-\boldsymbol{x}_{b}(0)\right)\\
 &  & \sigma{\displaystyle \sum_{p\in\mathcal{V}\setminus\left\{ b,m\right\} }}\left(A_{bp}(0)\boldsymbol{x}_{p}(0)-A_{bp}(0)\boldsymbol{x}_{b}(0)\right)\\
 & = & \boldsymbol{x}_{b}(1).
\end{eqnarray*}
And, by \eqref{eq:state-condition-for-other-agents} to \eqref{eq:weight-matrix-between-friends-3}
and \eqref{eq:adversary-friend-2}, one has,
\begin{eqnarray*}
 &  & \bar{\boldsymbol{x}}_{m}(1)\\
 & = & \bar{\boldsymbol{x}}_{m}(0)+\sigma\bar{A}_{mb}(0)\bar{\boldsymbol{x}}_{b}(0)-\sigma\bar{A}_{mb}\bar{\boldsymbol{x}}_{m}(0)\\
 & + & \sigma{\displaystyle \sum_{p\in\mathcal{V}\setminus\left\{ b,m\right\} }}\left(\bar{A}_{mp}(0)\bar{\boldsymbol{x}}_{p}(0)-\bar{A}_{mp}(0)\bar{\boldsymbol{x}}_{m}(0)\right)\\
 & = & \bar{\boldsymbol{x}}_{m}(0)+\sigma\bar{A}_{mb}(0)\left(\bar{\boldsymbol{x}}_{b}(0)-\boldsymbol{x}_{m}(0)-\boldsymbol{x}_{b}(0)+\bar{\boldsymbol{x}}_{b}(0)\right)\\
 & + & \sigma{\displaystyle \sum_{p\in\mathcal{V}\setminus\left\{ b,m\right\} }}\left(A_{mp}(0)\boldsymbol{x}_{p}(0)-A_{mp}(0)\boldsymbol{x}_{m}(0)\right)\\
 & = & \bar{\boldsymbol{x}}_{m}(0)+\sigma\bar{A}_{mb}(0)\left(\boldsymbol{x}_{b}(0)-\boldsymbol{x}_{m}(0)\right)\\
 & + & \sigma\bar{A}_{mb}(0)\left(2\bar{\boldsymbol{x}}_{b}(0)-2\boldsymbol{x}_{b}(0)\right)\\
 & + & \sigma{\displaystyle \sum_{p\in\mathcal{V}\setminus\left\{ b,m\right\} }}\left(A_{mp}(0)\boldsymbol{x}_{p}(0)-A_{mp}(0)\boldsymbol{x}_{m}(0)\right)\\
 & = & \boldsymbol{x}_{m}(0)+\boldsymbol{x}_{b}(0)-\bar{\boldsymbol{x}}_{b}(0)\\
 & + & \sigma A_{mb}(0)\left(\boldsymbol{x}_{b}(0)-\boldsymbol{x}_{m}(0)\right)+\left(\bar{\boldsymbol{x}}_{b}(0)-\boldsymbol{x}_{b}(0)\right)\\
 & + & \sigma{\displaystyle \sum_{p\in\mathcal{V}\setminus\left\{ b,m\right\} }}\left(A_{mp}(0)\boldsymbol{x}_{p}(0)-A_{mp}(0)\boldsymbol{x}_{m}(0)\right)\\
 & = & \boldsymbol{x}_{m}(0)+\sigma A_{mb}(0)\left(\boldsymbol{x}_{b}(0)-\boldsymbol{x}_{m}(0)\right)\\
 & + & \sigma{\displaystyle \sum_{p\in\mathcal{V}\setminus\left\{ b,m\right\} }}\left(A_{mp}(0)\boldsymbol{x}_{p}(0)-A_{mp}(0)\boldsymbol{x}_{m}(0)\right)\\
 & = & \boldsymbol{x}_{m}(1).
\end{eqnarray*}
Finally, for all $p\in\mathcal{V}_{\mathcal{B}}\setminus\left\{ b,m\right\} $,
according to \eqref{eq:state-condition-for-other-agents}, \eqref{eq:weight-matrix-between-friends-3}
and \eqref{eq:weight-matrix-between-othet-agents}, one has,

\begin{eqnarray*}
 &  & \bar{\boldsymbol{x}}_{p}(1)\\
 & = & \bar{\boldsymbol{x}}_{p}(0)+\sigma{\displaystyle \sum_{q\in\mathcal{N}_{p}}}\left(\bar{A}_{pq}(0)\bar{\boldsymbol{x}}_{q}(0)-\bar{A}_{pq}(0)\bar{\boldsymbol{x}}_{p}(0)\right)\\
 & = & \boldsymbol{x}_{p}(0)+\sigma{\displaystyle \sum_{q\in\mathcal{N}_{p}}}\left(A_{pq}(0)\boldsymbol{x}_{q}(0)-A_{pq}(0)\boldsymbol{x}_{p}(0)\right)\\
 & = & \boldsymbol{x}_{p}(1).
\end{eqnarray*}

We notice that $\bar{A}_{pq}(k)=A_{pq}(k)$ for all $(p,q)\in\mathcal{E}$
and $k\ge1$; As such \eqref{eq:privacy-definition} in Definition
\ref{def:definition of privacy} can be straightforward verified.

Case 2: The proof of this case is similar in spirit with the Case
1, hence omitted for brevity. 
\end{proof}
\begin{rem}
Consider the multi-agent system \eqref{equ:matrix-consensus-overall}
adopting Algorithm 1. If an agent $b$ is connected to the remaining
agents in the network only through an (or a group of colluding) honest-but-curious
agent $a$, then the initial state of agent $b$ can be uniquely inferred
by agent $a$ in an asymptotic sense. To see this, we notice that
\begin{align}
\boldsymbol{x}_{b}(0) & =\boldsymbol{x}_{b}(l)\nonumber \\
 & -\sigma\sum_{k=0}^{l-1}A_{ab}(k)\boldsymbol{x}_{a}(k)+\sigma\sum_{k=0}^{l-1}A_{ab}(k)\boldsymbol{x}_{b}(k).\label{eq:inferring initial state}
\end{align}
The average consensus is achieved as $l$ goes to infinity, namely,
i.e., $\mathbf{lim}_{l\rightarrow\infty}\boldsymbol{x}_{a}(l)=\mathbf{lim}_{l\rightarrow\infty}\boldsymbol{x}_{b}(l)$.
Therefore, agent $a$ can uniquely infer the initial state of agent
$b$ through \eqref{eq:inferring initial state}.
\end{rem}
\begin{rem}
For $k\geq1$, recall that the information of agent $i$ derived from
agent $j$ is $\boldsymbol{y}_{j\rightarrow i}(k)=A_{ij}(k)\boldsymbol{x}_{j}(k)$;
then one has

\[
V_{\rho(k)}\text{{\bf diag}}\left(\gamma_{ij}^{\rho(k)},\zeta_{ij}^{\rho(k)},0,\cdots,0\right)V_{\rho(k)}^{\top}\boldsymbol{x}_{j}(k)=\boldsymbol{y}_{j\rightarrow i}(k),
\]
where 
\[
V_{\rho(k)}=\left(\frac{\boldsymbol{v}_{\rho(k)}}{\parallel\boldsymbol{v}_{\rho(k)}\parallel},\frac{\boldsymbol{v}_{d+d^{\prime}}}{\parallel\boldsymbol{v}_{d+d^{\prime}}\parallel},P_{0}\right).
\]
Thus, 
\begin{align*}
 & \boldsymbol{v}_{\rho(k)}^{\top}\boldsymbol{x}_{j}(k)\\
= & \frac{1}{\gamma_{ij}^{\rho(k)}}\left[\left(\frac{\boldsymbol{v}_{\rho(k)}}{\parallel\boldsymbol{v}_{\rho(k)}\parallel},\frac{\boldsymbol{v}_{d+d^{\prime}}}{\parallel\boldsymbol{v}_{d+d^{\prime}}\parallel},P_{0}\right)^{\top}\boldsymbol{y}_{j\rightarrow i}(k)\right]_{1},
\end{align*}
 and 
\begin{align*}
 & \boldsymbol{v}_{d+d^{\prime}}^{\top}\boldsymbol{x}_{j}(k)\\
 & =\frac{1}{\zeta_{ij}^{\rho(k)}}\left[\left(\frac{\boldsymbol{v}_{\rho(k)}}{\parallel\boldsymbol{v}_{\rho(k)}\parallel},\frac{\boldsymbol{v}_{d+d^{\prime}}}{\parallel\boldsymbol{v}_{d+d^{\prime}}\parallel},P_{0}\right)^{\top}\boldsymbol{y}_{j\rightarrow i}(k)\right]_{2}.
\end{align*}
We notice that there exists $d+d^{\prime}$ free variables in at most
two equations, therefore agent $i$ can not infer $\boldsymbol{x}_{j}(k)$
associated with agent $j$ for $k\geq1$.
\end{rem}
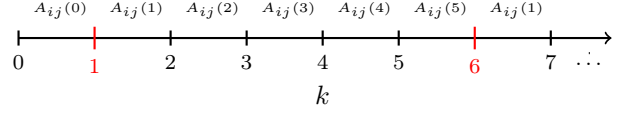
\begin{figure}[H]
\begin{centering}
\begin{tikzpicture}

\draw[->,thick] (0,0) -- (7.8,0) node[anchor=north] {};

\draw	(4,-0.5) node[anchor=north] {$k$};

\draw[-,thick] (0,0.1) -- (0,-0.1) node[anchor=north,font=\fontsize{8}{8}\selectfont] {0};

\draw[-,thick,color=red] (1,0.15) -- (1,-0.15) node[anchor=north,font=\fontsize{8}{8}\selectfont] {1};

\draw[-,thick] (2,0.1) -- (2,-0.1) node[anchor=north,font=\fontsize{8}{8}\selectfont] {2};

\draw[-,thick] (3,0.1) -- (3,-0.1) node[anchor=north,font=\fontsize{8}{8}\selectfont] {3};

\draw[-,thick] (4,0.1) -- (4,-0.1) node[anchor=north,font=\fontsize{8}{8}\selectfont] {4};

\draw[-,thick] (5,0.1) -- (5,-0.1) node[anchor=north,font=\fontsize{8}{8}\selectfont] {5};

\draw[-,thick,color=red] (6,0.15) -- (6,-0.15) node[anchor=north,font=\fontsize{8}{8}\selectfont] {6};

\draw[-,thick] (7,0.1) -- (7,-0.1) node[anchor=north,font=\fontsize{8}{8}\selectfont] {7};

\draw[-,thick] (7.5,0.-0.2) -- (7.5,-0.2) node[anchor=north,font=\fontsize{8}{8}\selectfont] {$\ldots$};

\draw	
		 (0.55,0.6) node[anchor=north] {{\tiny $A_{ij}(0)$}}
		 (1.55,0.6) node[anchor=north] {{\tiny $A_{ij}(1)$}}
		 (2.55,0.6) node[anchor=north] {{\tiny $A_{ij}(2)$}}
		 (3.55,0.6) node[anchor=north] {{\tiny $A_{ij}(3)$}}
		 (4.55,0.6) node[anchor=north] {{\tiny $A_{ij}(4)$}}
		 (5.55,0.6) node[anchor=north] {{\tiny $A_{ij}(5)$}}
		 (6.55,0.6) node[anchor=north] {{\tiny $A_{ij}(1)$}}
;

\end{tikzpicture}
\par\end{centering}
\caption{An illustration for the switching pattern of the matrix-valued weights
on edge $(i,j)\in\mathcal{E}$ with period $T=5$. The boundaries
between neighboring period are highlighted by vertical red bars.}
\label{fig:switching-pattern}
\end{figure}

\section{Simulation\label{sec:Simulation}}

We now provide a simulation example to illustrate the effectiveness
of the proposed MWN-PPAC algorithm. Consider the $5$-node network
in Figure \ref{fig:integral-network} where each agent holds a $3$-dimensional
state. Choose the dimension of virtual state $d^{\prime}=3$ and the
matrix-valued weight on each edge $(i,j)\in\mathcal{E}$ adopting
a periodic switching pattern with period $T=5$, as illustrated in
Figure \ref{fig:switching-pattern}. We choose the construction of
the orthogonal vector set $\mathfrak{V}$ as the same as that discussed
in Remark \ref{rem:feasibility of B}, namely, by \eqref{eq:v1} to
\eqref{eq:v6}. We choose the initial states 
\[
\boldsymbol{x}^{\top}(0)=(\boldsymbol{x}_{1}^{\top}(0),\thinspace\boldsymbol{x}_{2}^{\top}(0),\thinspace\boldsymbol{x}_{3}^{\top}(0),\thinspace\boldsymbol{x}_{4}^{\top}(0),\thinspace\boldsymbol{x}_{5}^{\top}(0))^{\top},
\]
as follows,
\[
\boldsymbol{x}_{1}(0)=(0.20~~0.30~~0.25~~0.60~~0.32~~0.65)^{\top},
\]
\[
\boldsymbol{x}_{2}(0)=(0.60~~0.72~~0.57~~0.24~~0.91~~0.95)^{\top},
\]
\[
\boldsymbol{x}_{3}(0)=(0.52~~0.71~~0.80~~0.20~~0.12~~0.62)^{\top},
\]
\[
\boldsymbol{x}_{4}(0)=(0.02~~0.04~~0.12~~0.82~~0.38~~0.23)^{\top},
\]
\[
\boldsymbol{x}_{5}(0)=(0.37~~0.17~~0.77~~0.33~~0.32~~0.72)^{\top}.
\]
Note that the virtual state can be randomly chosen. Choose $\sigma=2$
and randomly choose $\gamma_{ij}^{\rho(k)},\zeta_{ij}^{\rho(k)}\in(0,\frac{1}{4(n-1)\sigma})$,
$\alpha_{ij}>0$, $\beta_{ij}>0$ for $(i,j)\in\mathcal{E}$ and $k>0$.
The average value of $\boldsymbol{x}(0)$ is 
\begin{align*}
\text{Avg}(\boldsymbol{x}(0)) & =\frac{1}{5}{\displaystyle \sum_{j=1}^{5}}\boldsymbol{x}_{j}(0)\\
 & =(0.34~~0.39~~0.50~~0.44~~0.41~~0.63)^{\top}.
\end{align*}

As one can see, average consensus is asymptotically achieved; see
Figure \ref{fig:evolution-real-state}.

\begin{figure}
\begin{centering}
\includegraphics[width=9cm]{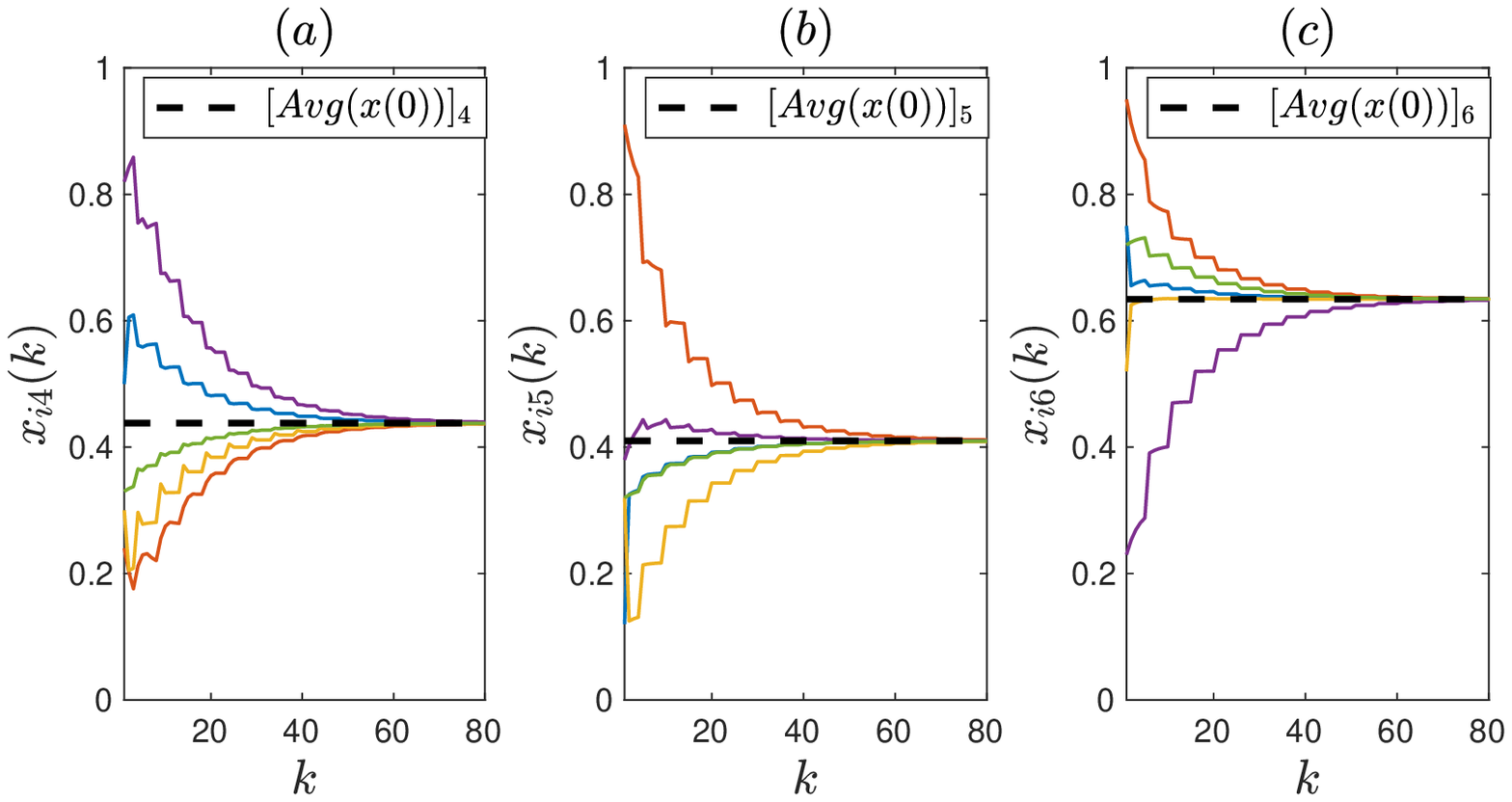}
\par\end{centering}
\caption{Evolution of real agent states $\boldsymbol{x}_{i4}(k),\boldsymbol{x}_{i5}(k)$
and $\boldsymbol{x}_{i6}(k)$.}
\label{fig:evolution-real-state}
\end{figure}

Moreover, consider privacy preservation on the information channel
from agent $3$ to its neighboring agent $2$. The transmitted information
versus the state of the agent are shown in Figure \ref{fig:transmited-real}.
One can see that the information exchanged amongst neighbor agents
is not related to the true state of the corresponding agents. 

\begin{figure}
\begin{centering}
\includegraphics[width=9cm]{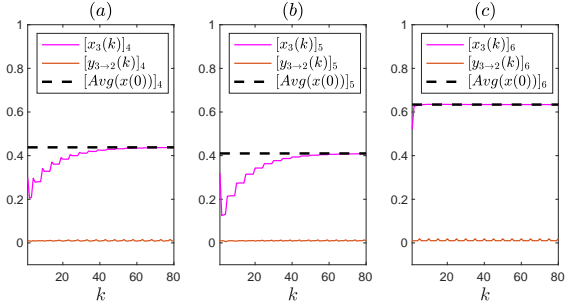}
\par\end{centering}
\caption{Trajectories of real agent states $[\boldsymbol{x}_{3}(k)]_{4}$,
$[\boldsymbol{x}_{3}(k)]_{5}$ and $[\boldsymbol{x}_{3}(k)]_{6}$
versus the information transmited from agent 3 to agent 2, namely,
$[y_{3\rightarrow2}(k)]_{4}$, $[y_{3\rightarrow2}(k)]_{5}$ and $[y_{3\rightarrow2}(k)]_{6}$.}
\label{fig:transmited-real}
\end{figure}

We further examine another arbitrary initial value, namely,
\[
\bar{\boldsymbol{x}}^{\top}(0)=(\bar{\boldsymbol{x}}_{1}^{\top}(0),\thinspace\bar{\boldsymbol{x}}_{2}^{\top}(0),\thinspace\bar{\boldsymbol{x}}_{3}^{\top}(0),\thinspace\bar{\boldsymbol{x}}_{4}^{\top}(0),\thinspace\bar{\boldsymbol{x}}_{5}^{\top}(0))^{\top},
\]
such that $\bar{\boldsymbol{x}}_{i}(0)=\boldsymbol{x}_{i}(0)$ for
$i\in\left\{ 2,4,5\right\} $ and 
\[
\bar{\boldsymbol{x}}_{1}(0)=(0.30~~0.20~~0.65~~0.50~~0.12~~0.75)^{\top},
\]
\[
\bar{\boldsymbol{x}}_{3}(0)=(0.42~~0.81~~0.40~~0.30~~0.32~~0.52)^{\top}.
\]
A comparison of the state evolution of Algorithm 1 initiated from
$\boldsymbol{x}(0)$ and $\bar{\boldsymbol{x}}(0)$, respectively,
is shown in Figures \ref{fig:trajectory_real_comparison} and \ref{fig:trajectory_virtual_comparison},
where only $k=0,1,2,3,4,5$ are shown. One can see that although Algorithm
1 is initiated from different initial values, the real states in each
dimension coincide at $k=1$ and therefore at all $k>1$.

\begin{figure}
\begin{centering}
\includegraphics[width=9cm]{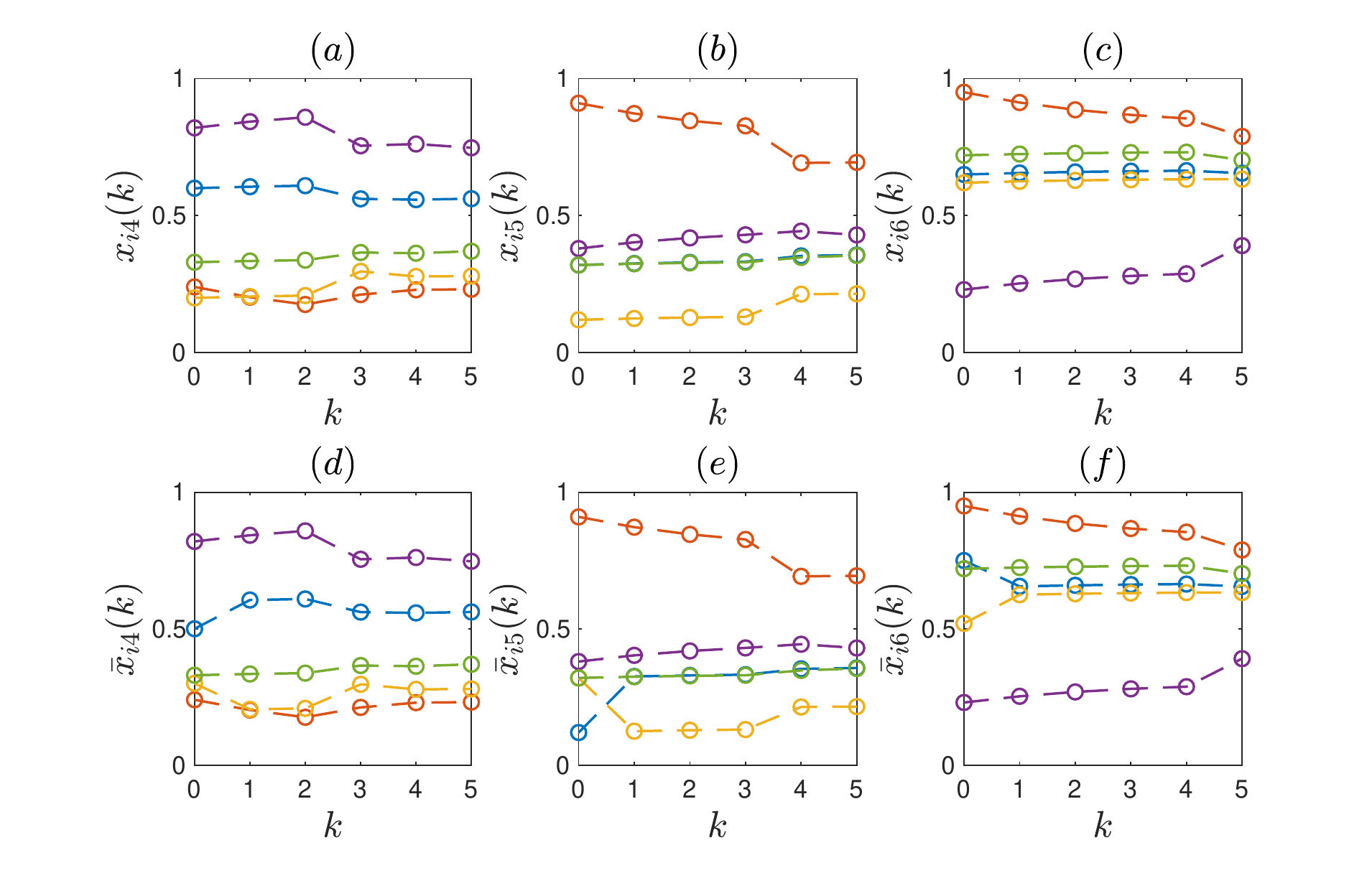}
\par\end{centering}
\caption{Trajectories of real agent states initiated from $\boldsymbol{x}(0)$
(($a$)-($c$)) and that by $\bar{\boldsymbol{x}}(0)$ (($d$)-($f$)),
respectively. }
\label{fig:trajectory_real_comparison}
\end{figure}

\begin{figure}
\begin{centering}
\includegraphics[width=9cm]{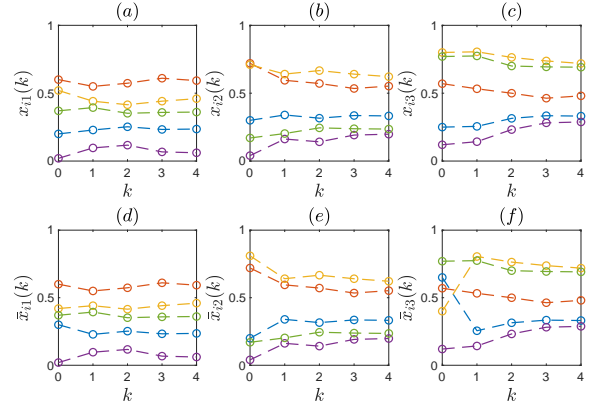}
\par\end{centering}
\caption{Trajectories of virtual agent states initiated from $\boldsymbol{x}(0)$
(($a$)-($c$)) and that by $\bar{\boldsymbol{x}}(0)$ (($d$)-($f$)),
respectively.}
\label{fig:trajectory_virtual_comparison}
\end{figure}

\section{Conclusion Remarks\label{sec:Conclusion}}

In this paper, we proposed an algorithmic framework for vector-valued
privacy-preserving average consensus. The algorithm is essentially
built on two principles, namely, agent state lifting and dynamic matrix-valued
weight design. A self-contained analysis in terms of convergence and
privacy preservation of the algorithm was then provided. The proposed
algorithm is simple and efficient, and can be implemented in a distributed
manner. The proposed algorithm provides a new dimension for PPAC algorithm
design by utilizing the proper lifting of agents' state space. 

Future works in this direction  include examining privacy-preserving
distributed algorithms on matrix-weighted networks in the context
of estimation, control, and optimization.

\bibliographystyle{elsarticle-harv}
\addcontentsline{toc}{section}{\refname}\bibliography{mybib}

\end{document}